\let\oldproof\proof
\def\proof{\oldproof\unskip}
\def\thm@space@setup{%
  \thm@preskip=\parskip \thm@postskip=0pt
}
\newtheorem{theorem}{Theorem}
\newtheorem{lem}[theorem]{Lemma}
\newtheorem{corollary}[theorem]{Corollary}
\newtheorem{definition}{Definition}
\newtheorem{conjecture}{Conjecture}
\newcommand\case[1]{\noindent{\descriptionlabel{Case #1.}}\xspace}
\newcommand{\ie}{\emph{i.e.}\xspace}
\newcommand{\R}{\ensuremath{\mathcal{R}}\xspace}
\newcommand{\RR}{\ensuremath{\mathit{\mathcal{RR}}}\xspace}
\newcommand{\ch}{\ensuremath{\mathit{CH}}\xspace}
\newcommand{\seg}[1]{\ensuremath{\overline{#1}}\xspace}
\DeclareMathOperator{\dist}{dist}
\DeclareMathOperator{\length}{length}
\title{Self-approaching paths in simple polygons\thanks{A shorter version of this paper is to be presented at the 33rd International Symposium on Computational Geometry~\cite{sa-socg}.}}
\author{Prosenjit Bose\thanks{School of Computer Science, Carleton University, 
  Ottawa, Canada, 
  \texttt{jit@scs.carleton.ca}}
\and Irina Kostitsyna\thanks{Computer Science Department, Universit\'{e} libre de Bruxelles (ULB), 
  Brussels, Belgium, 
  \texttt{\{irina.kostitsyna,stefan.langerman\}@ulb.ac.be}}
\and Stefan Langerman\footnotemark[3]}
\date{}
\begin{document}

\maketitle

\begin{abstract}

We study \emph{self-approaching paths} that are contained in a simple polygon. A self-approaching path is a directed curve connecting two points such that the Euclidean distance between a point moving along the path and any future position does not increase, that is, for all points $a$, $b$, and $c$ that appear in that order along the curve, $|ac| \ge |bc|$. We analyze the properties, and present a characterization of shortest self-approaching paths. In particular, we show that a shortest self-approaching path connecting two points inside a polygon can be forced to use a general class of non-algebraic curves. While this makes it difficult to design an exact algorithm, we show how to find a self-approaching path inside a polygon connecting two points %if that path has a certain leeway.
under a model of computation which assumes that we can calculate involute curves of high order.

%We also analyze properties of the set of all points reachable from a given point using self-approaching paths, and, similarly, the set of all points from which a given point is reachable. 
Lastly, we provide an algorithm to test if a given simple polygon is self-approaching, that is, if there exists a self-approaching path for any two points inside the polygon.
\end{abstract}

\section{Introduction}

The problem of finding an optimal obstacle-avoiding path in a polygonal domain is one of the fundamental problems of computational geometry. Often a desired path has to conform to certain constraints. For example, a path may be required to be monotone~\cite{Arkin1989}, curvature-constrained~\cite{Dubins1957}, have no more than $k$ links~\cite{Mitchell1992c}, etc. A natural requirement to consider is that a point moving along a desired path must always be getting closer to its destination. Such \emph{radially monotone paths} appear, for example, in greedy geographic routing in network setting~\cite{Gao2012} and beacon routing in geometric setting~\cite{Biro2013}. A strengthening of a radially monotone path is a \emph{self-approaching path}~\cite{Icking1995,Icking1999,Aichholzer2001}: a point moving along a self-approaching path is always getting closer not only to its destination but also to all the points on the path ahead of it. There are several reasons to prefer self-approaching paths over radially monotone paths. First, unlike for a radially monotone path, any subpath of a self-approaching path is self-approaching. Therefore, if the destination is not known in advance and the desired path is required to be radially monotone, one will have to resort to using self-approaching paths. Second, the length of a radially monotone path can be arbitrarily large in comparison with the Euclidean distance between the source and the destination points, whereas self-approaching paths have a bounded detour~\cite{Icking1999}.

In this paper we study self-approaching paths that are contained in a simple polygon. We consider the following questions:
\begin{itemize}
\item Given two points $s$ and $t$ inside a simple polygon $P$, does there exist a self-approaching $s$-$t$ path inside $P$?
\item Find the shortest self-approaching $s$-$t$ path.
\item Given a point $s$ in a simple polygon $P$, what is the set of all points reachable from $s$ with self-approaching paths?
\item Given a point $t$, what is the set of all points from which $t$ is reachable with a self-approaching path?
\item Given a polygon $P$, test if it is self-approaching, \ie, if there exists a self-approaching path between any two points in $P$.
\end{itemize}

\subparagraph*{Related work.} Self-approaching curves were first introduced in the context of online searching for a kernel of a polygon~\cite{Icking1995}. They were further studied in~\cite{Icking1999}, where among other results, the authors prove that the length of any self-approaching path connecting two points is not greater than $5.3331$ times the Euclidean distance between the points. An equivalent definition of a self-approaching path is that for every point on the path there has to be a $90^{\circ}$ angle containing the rest of the path. Aichholzer et al.~\cite{Aichholzer2001} developed a generalization of self-approaching paths for an arbitrarily fixed angle $\alpha$ instead of $90^{\circ}$. A relevant type of paths is the increasing chords paths~\cite{Rote1994}, which are self-approaching in both directions. The nice properties of self-approaching and increasing chords paths and their potential to be applied in network routing were recognized by the graph drawing community. As a result, a number of papers appeared in the recent years on self-approaching and increasing chords graphs~\cite{Alamdari2012,Dehkordi2014,Nollenburg2016}.

\subparagraph*{}This paper is organized in the following way. We introduce a few definitions and concepts in Section~\ref{sec:prelim}. In Section~\ref{sec:sa-path-prop}, we characterize a shortest self-approaching path between two points in a simple polygon. In Section~\ref{sec:sa-path} we present an algorithm to construct the shortest self-approaching path between two points if it exists, or to report that it does not exist, by assuming a model of computation in which we can solve certain transcendental equations. %Later, we relax the assumption on the model of computation, and show how to find a shortest self-approaching path between two points, if such a path exists and has a certain leeway, by computing approximate solutions of the transcendental equations. 
Finally, in Section~\ref{sec:sa-poly} we present a linear-time algorithm to decide if a polygon is self-approaching, that is, if there is a self-approaching path between any two point of the polygon. In Section~\ref{sec:rev-reach-regs}, we discuss some properties of reachable and reverse-reachable regions.

\section{Preliminaries}\label{sec:prelim}

For two points $p_1$ and $p_2$ on a directed path $\pi$ that starts in point $s$, we shall say that $p_1<_{\pi}p_2$ if $p_1$ lies between $s$ and $p_2$ along $\pi$. For a directed path $\pi$ and two points $p_1<_{\pi}p_2$ on it, denote the sub-path from $p_1$ to $p_2$ by $\pi(p_1,p_2)$.

\begin{definition}
A \emph{self-approaching path} $\pi$ in a continuous domain is a piece-wise smooth\footnote{Some previous works do not require the curve to be smooth. However in this paper we will be mostly considering shortest self-approaching paths, and thus the requirement on smoothness is justified.} oriented curve such that for any three points $a$, $b$, and $c$ on it, such that $a<_\pi b <_\pi c$: $|ac|\ge |bc|$, where $|ac|$ and $|bc|$ are Euclidean distances.
\end{definition}
Icking et al. \cite{Icking1999} showed the following \emph{normal property} of a self-approaching path, that we will be using extensively in this paper,
\begin{lem}[the normal property \cite{Icking1999}]\label{lem:normal}
An $s$-$t$ path $\pi$ is self-approaching if and only if any normal to $\pi$ at any point $a\in\pi$ does not cross $\pi(a,t)$.
\end{lem}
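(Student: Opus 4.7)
The plan is to reduce the self-approaching condition to an infinitesimal condition via calculus, and then recognize that condition geometrically as a statement about normals. Parameterize $\pi$ by arc length $\tau\in[0,L]$, and for a fixed target point $c=\pi(\tau_c)$ define $g(\tau):=|\pi(\tau)-c|^2$ on $[0,\tau_c]$. By unpacking the definition, $\pi$ is self-approaching if and only if for every choice of $c\in\pi$ the function $g$ is non-increasing on $[0,\tau_c]$, since $|ac|\ge|bc|$ for $a<_\pi b<_\pi c$ is exactly $g(\tau_a)\ge g(\tau_b)$.

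Next I would differentiate: on a smooth piece of $\pi$, $g'(\tau)=2(\pi(\tau)-c)\cdot\pi'(\tau)$, so $g'(\tau)\le 0$ is equivalent to $(c-\pi(\tau))\cdot\pi'(\tau)\ge 0$. Geometrically, this says that $c$ lies in the closed half-plane bounded by the normal to $\pi$ at $\pi(\tau)$ on the forward side of motion. Thus the condition ``$g$ is non-increasing on $[0,\tau_c]$ for every $c$'' translates, point by point, into: at every $a\in\pi$, the subpath $\pi(a,t)$ lies entirely in the forward closed half-plane of the normal at $a$, i.e., the normal at $a$ does not cross $\pi(a,t)$.

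With this dictionary the two implications become short. For (self-approaching $\Rightarrow$ normal property): fix $a$ and an arbitrary $c\in\pi(a,t)$; monotonicity of $g$ on $[0,\tau_c]$ forces its one-sided derivative at $\tau=\tau_a$ to be $\le 0$, yielding $(c-a)\cdot\pi'(\tau_a)\ge 0$, so $c$ lies on the forward side of the normal at $a$. For (normal property $\Rightarrow$ self-approaching): the normal property at every interior point $b$ of $\pi(a,c)$ gives $g'(\tau_b)\le 0$ pointwise, and integrating over $[\tau_a,\tau_b]$ yields $g(\tau_a)\ge g(\tau_b)$, which is exactly the self-approaching inequality.

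The main obstacle, and the only part that needs care, is that $\pi$ is only piecewise smooth, so the tangent (and hence normal) is undefined at finitely many corners. I would handle this by applying the derivative argument separately on each smooth subarc, interpreting ``the normal at a corner'' as the pair of one-sided normals coming from the two adjacent pieces, and gluing the resulting monotonicity of $g$ across corners using continuity of $g$. One also must be careful about the phrase ``does not cross'': since $\pi(a,t)$ starts on the normal at $a$, tangential contact at $a$ itself must be permitted, whereas the argument rules out any genuine transverse crossing further along $\pi(a,t)$.
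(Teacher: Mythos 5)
The paper does not prove this lemma at all: it is imported from Icking et al.~\cite{Icking1999} as a known result, so there is no internal proof to compare against. Your argument is a correct, self-contained proof and is essentially the standard one: with an arc-length parameterization, fixing $c$ and studying $g(\tau)=|\pi(\tau)-c|^2$, the self-approaching condition is precisely that every such $g$ is non-increasing up to $\tau_c$, and $g'(\tau)=2(\pi(\tau)-c)\cdot\pi'(\tau)\le 0$ is precisely the statement that $c$ lies in the closed forward half-plane of the normal at $\pi(\tau)$; one-sided derivatives on the smooth pieces plus continuity of $g$ across the finitely many corners glue the two directions together, which also matches how the paper itself later treats bend points (two one-sided normals, e.g.\ in the geodesic lemma). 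The only step I would make explicit is the dictionary between ``the normal at $b$ does not cross $\pi(b,t)$'' and ``$\pi(b,t)$ lies in the closed forward half-plane,'' which you use in the converse direction: a priori ``not crossing'' leaves open which side the subpath is on, and the resolution is that $\pi(b,t)$ starts at $b$ on the normal line with tangent $\pi'(\tau_b)$ pointing into the forward open half-plane, so not crossing forces containment in the closed forward half-plane (this is exactly the paper's own rephrasing as the half-plane property in Lemma~\ref{lem:half-plane}). With that sentence added, the proof is complete.
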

\begin{definition}
A normal $h$ to a directed curve $\pi$ at some point $a\in\pi$ defines two half-planes. Let the \emph{positive half-plane} $h^{+}$ be the open half-plane which is congruent with the direction of $\pi$ at point $a$.
\end{definition}
We can rephrase the normal property in the following way.
\begin{lem}[the half-plane property]\label{lem:half-plane}
An $s$-$t$ path $\pi$ is self-approaching if and only if, for any normal $h$ to $\pi$ at any point $a\in\pi$, the subpath $\pi(a,t)$ lies completely in the positive half-plane $h^{+}$.
\end{lem}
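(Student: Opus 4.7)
The plan is to derive the half-plane property directly from the normal property of Lemma~\ref{lem:normal}: that lemma already tells us that a normal $h$ at a point $a\in\pi$ cannot cross $\pi(a,t)$, so the only genuinely new piece of information to establish is that the non-crossed subpath sits on the \emph{positive} side of $h$, as opposed to the other one.

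For the forward direction, I would fix a self-approaching $s$-$t$ path $\pi$, a point $a\in\pi$, and a normal $h$ at $a$. By Lemma~\ref{lem:normal} the normal does not cross $\pi(a,t)$, and since $\pi(a,t)$ is a connected set meeting $h$ at $a$, the set $\pi(a,t)\setminus\{a\}$ lies in one of the two open half-planes bounded by $h$. To pin down which one, I would use the piecewise-smoothness assumption on $\pi$: the forward tangent vector to $\pi$ at $a$ points into $h^{+}$ by the very definition of the positive half-plane, so a sufficiently short initial arc of $\pi(a,t)$ starting at $a$ lies in $h^{+}$. Combined with the fact that $\pi(a,t)\setminus\{a\}$ lies on a single side of $h$, this forces the entire $\pi(a,t)\setminus\{a\}$ to sit inside $h^{+}$.

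The converse direction is nearly immediate. If for every normal $h$ at every $a\in\pi$ we have $\pi(a,t)\subseteq h^{+}\cup\{a\}$, then $h$ meets $\pi(a,t)$ only at $a$, so the normal does not cross $\pi(a,t)$ in the sense of Lemma~\ref{lem:normal}. Applying that lemma immediately yields that $\pi$ is self-approaching.

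The main subtle point I expect to have to address is the behaviour at non-smooth points of $\pi$. At a corner the tangent (and hence the normal) is not uniquely defined, so ``the direction of $\pi$ at $a$'' and the associated positive half-plane $h^{+}$ must be interpreted via the forward one-sided tangent in order to make the statement unambiguous; with this convention the continuity argument above still produces a small initial arc of $\pi(a,t)$ inside $h^{+}$, and the argument carries through without modification.
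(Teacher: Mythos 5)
Your proposal is correct and takes essentially the same route as the paper, which offers no separate proof of Lemma~\ref{lem:half-plane} at all: it is presented as a direct rephrasing of the normal property (Lemma~\ref{lem:normal}), and your argument---connectedness of $\pi(a,t)$ plus the forward tangent direction at $a$ to select the side, with the converse immediate---is exactly that rephrasing made explicit. The only caveat, inherited from the paper's own open-half-plane formulation rather than introduced by you, is that ``does not cross'' only forces the subpath into the \emph{closed} positive half-plane (a self-approaching subpath can touch the normal line again, e.g.\ a semicircular arc whose endpoint lies on the normal at its starting point), but this boundary case does not affect the equivalence.
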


\begin{definition}
A \emph{bend} of a self-approaching path $\pi$ is a point of discontinuity of the first derivative of $\pi$. %Or, in other words, a bend is a point where two tangents lines form an angle less than $180^{\circ}$.
\end{definition}

%\begin{definition}
%Given a simple polygon $P$ and an ordered list of points $\L=\{s,p_1,p_2,\dots,p_n,t\}$, denote $\sa(\L)$ to be a shortest self-approaching path inside $P$ that visits all the points of $\L$ in the given order. When $\L$ consists of only two points $s$ and $t$ we will write $\sa(s,t)$. 
%\end{definition}
%
%\begin{definition}
%Given a simple polygon $P$, an ordered list of points $\L=\{s,p_1,p_2,\dots,p_n,t\}$, and a convex set $S$, denote $\sa(\L,S)$ to be a shortest self-approaching path inside $P$ that visits all the points of $\L$ in the given order such that no normal line to the path intersects set $S$.
%\end{definition}

\begin{definition}
A \emph{reachable region} $\R(s)\subseteq P$, for a given point $s$ in a polygon $P$, is the set of all points $t\in P$ for which there exists a self-approaching $s$-$t$ path $\pi \in P$.
\end{definition}

\begin{definition}
A \emph{reverse-reachable region} $\RR(t)\subseteq P$, for a given point $t$ in a polygon $P$, is a set of all points $s\in P$ for which there exists a self-approaching $s$-$t$ path $\pi \in P$.
\end{definition}

\subsection{Involutes}
Next we introduce involute curves of $k$th order that will appear later as parts of shortest self-approaching paths.

\begin{figure}[t]
\centering
\includegraphics{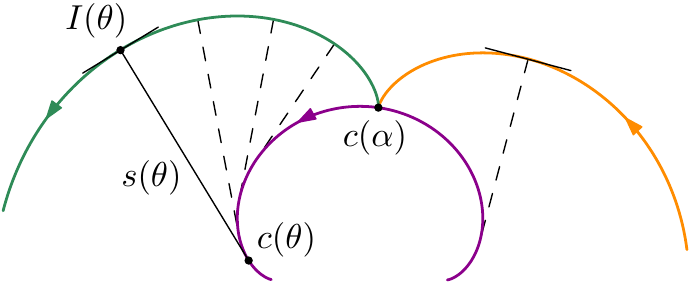}%
\caption{Curve $c(\theta)$ and two involutes. The involute on the left is defined by tangents pointing in the negative direction of $c$, and the involute on the right is defined by tangents pointing in the positive direction of $c$.}
\label{fig:involute-1}%
\end{figure}

An involute of a convex curve $c$ is a curve traced by an end point of an unwinding pull-taut string rolled on $c$. Consider a parameterization $\vec{c}(\theta)$ of the curve, and let $c$ be oriented in the direction of growth of the parameter $\theta$. The involute of $c$ can be computed by the following formula:
\[
\vec{I}(\theta)=\vec{c}(\theta)-s(\theta)\frac{\vec{c}\,'(\theta)}{|\vec{c}\,'(\theta)|}\,,
\]
where $s(\theta)$ is the length of the tangent segment $|\seg{c(\theta)I(\theta)}|$,
\[
s(\theta)=\int\limits_{\alpha}^{\theta} |\vec{c}\,'(t)| dt\,.
\]
The constant $\alpha$ defines the point at which the involute $I$ will start unwinding around $c$ (see Figure~\ref{fig:involute-1}). The involute has two branches: the \emph{positive} branch has the tangent point moving in the positive direction of $c$, and the \emph{negative} branch has the tangent point moving in the negative direction of $c$. If the curve $c$ is defined on the interval $[\theta_{\min},\theta_{max}]$, then the positive branch of its involute is defined on the interval $[\alpha,\theta_{\max}]$, and the negative---on the interval $[\theta_{\min},\alpha]$.

We define an \emph{involute of order $k$} of a curve $c(\theta)$ to be an involute of one branch (that contains the point corresponding to parameter $\alpha_{k}$) of an involute of order $k-1$ of $c(\theta)$, with an involute of order $0$ being the curve $c(\theta)$ itself,
\[
\begin{aligned}
\vec{I}_k(\theta)&=\vec{I}_{k-1}(\theta)-s_{k}(\theta)\frac{\vec{I}'_{k-1}(\theta)}{|\vec{I}'_{k-1}(\theta)|}\,,\qquad\text{where}\qquad s_{k}(\theta)=\int\limits_{\alpha_k}^{\theta} |\vec{I}'_{k-1}(t)| dt\,,\\
\vec{I}_0(\theta)&=\vec{c}(\theta)\,.
\end{aligned}
\]

\begin{figure}[t]
\centering
\includegraphics{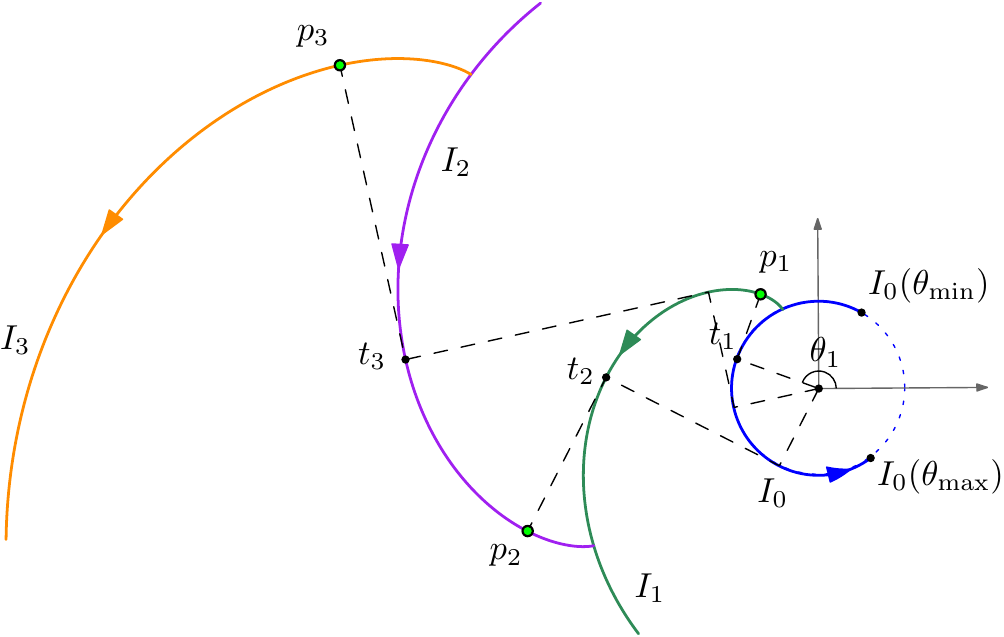}%
\caption{Circular arc $I\!_0(\theta)$, and three involutes $I_1(\theta)$, $I_2(\theta)$, and $I_3(\theta)$: for each $k$, $I\!_k(\theta)$ is an involute about $I\!_{k-1}(\theta)$ that passes through point $p_k$. The arrows designate the direction of growth of the parameter $\theta$.}
\label{fig:involutes}%
\end{figure}

In the following sections we will show that shortest self-approaching paths consist of straight-line segments, circular arcs, and involutes of circular arcs of some order. Next we will derive a formula for an involute of a circle of order $k$.

Consider a circular arc $I\!_0$ that is given by formula
\[
\vec{I}_0(\theta)=r_0\begin{pmatrix}\cos\theta\\ \sin\theta\end{pmatrix}\,,
\]
and is defined for angles in the range $[\theta_{min},\theta_{max}]$.

\paragraph{Involute of order $1$.} Let $I_1$ be the involute of $I\!_0$ that passes through some point $p_1$ with polar coordinates $(r_1,\varphi_1)$.
\[
\vec{I}_1(\theta)=\vec{I}_{0}(\theta)-\int\limits_{\alpha_{0}}^{\theta} |\vec{I}_{0}'(t)| dt\cdot\frac{\vec{I}'_{0}(\theta)}{|\vec{I}'_{0}(\theta)|}
%=r_0\begin{pmatrix}\cos\theta\\ \sin\theta\end{pmatrix}-\left(\int\limits_{s_1}^{\theta} r_{0} d\theta\right)\begin{pmatrix}-\sin\theta\\ \cos\theta\end{pmatrix}\\
=r_0\begin{pmatrix}\cos\theta\\ \sin\theta\end{pmatrix}-(r_0 \theta+c_1)\begin{pmatrix}-\sin\theta\\ \cos\theta\end{pmatrix}\,.
\]
Denote the parameter value at which $I_1$ passes through $p_1$ as $\theta_1$. Then from the following system of equations
\[
I_1(\theta_1)=r_1\begin{pmatrix}\cos\varphi_1\\ \sin\varphi_1\end{pmatrix}=r_0\begin{pmatrix}\cos\theta_1\\ \sin\theta_1\end{pmatrix}-(r_0 \theta_1+c_1)\begin{pmatrix}-\sin\theta_1\\ \cos\theta_1\end{pmatrix}\,,
\]
it follows that
\[
\begin{aligned}
r_1\cos(\theta_1-\varphi_1)&=r_0\,,\\
r_1\sin(\theta_1-\varphi_1)&=-(r_0\theta_1+c_1)\,.
\end{aligned}
\]
This system has a closed form solution for $\theta_1$ and $c_1$:
\[
\begin{aligned}
\theta_1&=\varphi_1\pm\arccos\frac{r_0}{r_1}\,,\\
c_1&=-\sqrt{r_1^2-r_0^2}-r_0\left(\varphi_1\pm\arccos\frac{r_0}{r_1}\right)\,.
\end{aligned}
\]
Depending on whether the value of $\theta_{1}$ in a given solution falls into the range $[\theta_{\min},\theta_{\max}]$, the involute $I\!_{1}$ can have two branches, one branch, or be undefined.

Let the tangent line drawn from $p_1$ touch $I\!_0$ at point $t_1$. By definition of an involute, point $t_1$ coincides with $I\!_0(\theta_1)$. Then the length of a tangent segment from $p_1$ to $t_1$ is exactly the coefficient at the last term of $I_1(\theta)$ evaluated at parameter $\theta_1$:
\[
|\seg{p_1 t_1}|=-(r_0 \theta_1+c_1)=\sqrt{r_1^2-r_0^2}\,.
\]

\paragraph{Involute of order $2$.} Next, for a selected branch of $I\!_{1}$, we compute an involute of the second order $I_2$ that passes through some point $p_2(r_2,\varphi_2)$:
\[
\begin{split}
I_2(\theta)&=I\!_{1}(\theta)-\int\limits_{\alpha_{1}}^{\theta} |I'_{1}(t)| dt\cdot\frac{I'_{1}(\theta)}{|I'_{1}(\theta)|}\\
%&=r_0\begin{pmatrix}\cos\theta\\ \sin\theta\end{pmatrix}-(r_0 \theta+c_1)\begin{pmatrix}-\sin\theta\\ \cos\theta\end{pmatrix}+\left(\int\limits_{\theta}^{s_2} (r_{0}\theta + c_1) d\theta\right)\begin{pmatrix}\cos\theta\\ \sin\theta\end{pmatrix}\\
&=r_0\begin{pmatrix}\cos\theta\\ \sin\theta\end{pmatrix}-(r_0 \theta+c_1)\begin{pmatrix}-\sin\theta\\ \cos\theta\end{pmatrix}-(r_{0}\frac{\theta^2}{2} + c_1\theta+c_2)\begin{pmatrix}\cos\theta\\ \sin\theta\end{pmatrix}\,,
\end{split}
\]
where
\[
I_2(\theta_2)=r_2\begin{pmatrix}\cos\varphi_2\\ \sin\varphi_2\end{pmatrix}=r_0\begin{pmatrix}\cos\theta_2\\ \sin\theta_2\end{pmatrix}-(r_0 \theta_2+c_1)\begin{pmatrix}-\sin\theta_2\\ \cos\theta_2\end{pmatrix}-(r_{0}\frac{\theta_2^2}{2} + c_1\theta_2+c_2)\begin{pmatrix}\cos\theta_2\\ \sin\theta_2\end{pmatrix}\,,
\]
and subsequently,
\[
\begin{aligned}
r_2\cos(\theta_2-\varphi_2)&=r_0-r_{0}\frac{\theta_2^2}{2} - c_1\theta_2-c_2\,,\\
r_2\sin(\theta_2-\varphi_2)&=-(r_0\theta_2+c_1)\,.
\end{aligned}
\]
These equations can no longer be solved analytically. In this case one has to resort to iterative methods to obtain an approximate solution. Similarly to the previous case, the involute $I\!_{2}$, if defined, can have one or two branches.

Let the tangent line drawn from $p_2$ touch $I_1$ at point $t_2$. Then,
\[
|\seg{p_2 t_2}|=-(r_{0}\frac{\theta_2^2}{2} + c_1\theta_2+c_2)\,.
\]

\paragraph{Involute of order $k$.} Continuing previous calculations in a similar matter we can obtain the following formulas for involutes of order $k=2\ell$ and $k=2\ell+1$:
\[
\begin{aligned}
I\!_{2\ell}(\theta)&=\left(a_0(\theta)-a_2(\theta)+\dots+(-1)^{\ell}a_{2\ell}(\theta)\right)\begin{pmatrix}\cos\theta\\ \sin\theta\end{pmatrix}\\
&-\left(a_1(\theta)-a_3(\theta)+\dots+(-1)^{\ell-1}a_{2\ell-1}(\theta)\right)\begin{pmatrix}-\sin\theta\\ \cos\theta\end{pmatrix}\,,\\
I\!_{2\ell+1}(\theta)&=\left(a_0(\theta)-a_2(\theta)+\dots+(-1)^{\ell}a_{2\ell}(\theta)\right)\begin{pmatrix}\cos\theta\\ \sin\theta\end{pmatrix}\\
&-\left(a_1(\theta)-a_3(\theta)+\dots+(-1)^{\ell+1}a_{2\ell+1}(\theta)\right)\begin{pmatrix}-\sin\theta\\ \cos\theta\end{pmatrix}\,,
\end{aligned}
\]
or shorter,
\[
I\!_{k}(\theta)=\sum_0^{\lfloor\frac{k}{2}\rfloor} (-1)^i a_{2i}(\theta) \begin{pmatrix}\cos\theta\\ \sin\theta\end{pmatrix}-\sum_1^{\lceil\frac{k}{2}\rceil} (-1)^{i-1} a_{2i-1}(\theta)\begin{pmatrix}-\sin\theta\\ \cos\theta\end{pmatrix}\,,
\]
where
\[
\begin{aligned}
a_{i}(\theta)=r_0\frac{\theta^{i}}{i!}+c_1\frac{\theta^{i-1}}{(i-1)!}+\dots+c_i\,.
\end{aligned}
\]
Given a point $p_i(r_i,\varphi_i)$ for each involute $I\!_i$ of order $i$ (for all $1\leq i\leq k$), the constants $c_i$ can be found from the following equations:
\begin{equation}\label{eq:involute}
\begin{aligned}
r_i\cos(\theta_i-\varphi_i)&=a_0(\theta_i)-a_2(\theta_i)+\dots\,,\\
r_i\sin(\theta_i-\varphi_i)&=a_1(\theta_i)-a_3(\theta_i)+\dots\,.
\end{aligned}
\end{equation}
And the length of a tangent segment $\seg{p_k t_k}$ is:
\[
|\seg{p_k t_k}|=|a_k(\theta_k)|\,.
\]

\section{Properties of a shortest self-approaching path}\label{sec:sa-path-prop}

In this section we will prove the following properties of a shortest self-approaching path from $s$ to $t$ inside a simple polygon $P$:
\begin{itemize}
\item A shortest self-approaching path is unique.
\item The shortest self-approaching path consists of straight segments, circular arcs and involutes to the latter pieces of the path.
%\item The length of the shortest self-approaching path is a constant factor of the length of the geodesic from $s$ to $t$.
\end{itemize}
We begin with proving several lemmas:
\begin{lem}
For any two points $p_1<_{\pi}p_2$ on a self-approaching $s$-$t$ path $\pi$ in $\mathbb{R}^2$, the perpendicular bisector of straight-line segment $\seg{p_1p_2}$ does not intersect sub-path $\pi(p_2,t)$.
\end{lem}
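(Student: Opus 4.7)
The plan is to apply the self-approaching property directly to a suitably chosen triple. For any point $c \in \pi(p_2,t)$ we have $p_1 <_{\pi} p_2 <_{\pi} c$, so the definition of a self-approaching path immediately yields
\[
|p_1 c| \ge |p_2 c|.
\]
This is the only inequality I will need; everything else is a geometric reinterpretation.

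Next I would recall the standard characterization of the perpendicular bisector $\ell$ of $\seg{p_1 p_2}$ as the locus $\{x\in\mathbb{R}^2 : |p_1 x|=|p_2 x|\}$. Equivalently, $\ell$ splits the plane into two open half-planes, and the side containing $p_2$ is exactly $\{x : |p_2 x|<|p_1 x|\}$ while the side containing $p_1$ is $\{x : |p_1 x|<|p_2 x|\}$. Combining this with the inequality above, every point $c\in\pi(p_2,t)$ either lies on $\ell$ or strictly on the $p_2$-side; in no case can $c$ lie strictly on the $p_1$-side. Hence $\pi(p_2,t)$ is contained in the closed half-plane bounded by $\ell$ that contains $p_2$, and so it cannot cross to the opposite side of $\ell$.

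There is no real obstacle in the argument; it is a one-line application of the definition. The only delicate point is whether ``does not intersect'' is meant in the strict sense, i.e., whether $\pi(p_2,t)$ is forbidden from even touching $\ell$. Should that be required, I would additionally argue as follows: if some $q\in\pi(p_2,t)$ satisfied $|p_1 q|=|p_2 q|$, then applying the self-approaching property to $p_1<_{\pi}b<_{\pi}p_2<_{\pi}q$ for any $b\in\pi(p_1,p_2)$ yields $|p_1 q|\ge |b q|\ge |p_2 q|$, forcing $|b q|=|p_1 q|$ for all such $b$; so $\pi(p_1,p_2)$ would be constrained to a circular arc centered at $q$, which is a strong structural restriction one can then combine with the normal property (Lemma~\ref{lem:normal}) at $q$ to pin down the configuration. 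For the purposes of the lemma it suffices to establish that $\pi(p_2,t)$ stays on the $p_2$-closed side of $\ell$, which is exactly what the single inequality $|p_1 c|\ge|p_2 c|$ gives us.
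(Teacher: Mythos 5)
Your proposal is correct and is essentially the paper's own argument: the paper assumes a point of $\pi(p_2,t)$ lies interior to the half-plane of the bisector containing $p_1$ and derives $|p_1q|<|p_2q|$, contradicting the self-approaching property, which is just the contrapositive of your direct one-line application of $|p_1c|\ge|p_2c|$. Your remark about the boundary (touching) case matches the fact that the paper's proof likewise only excludes crossing into the open $p_1$-side, so no further argument is needed.
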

\begin{proof}
Let $h^{-}$ be the half-plane defined by the perpendicular bisector of segment $\seg{p_1p_2}$ that contains $p_1$. Assume there is a point $q$ on the subpath $\pi(p_2,t)$ that is interior to $h^{-}$ (refer to Figure~\ref{fig:bisector}). Then $|p_1q|<|p_2q|$, which contradicts the definition of a self-approaching path.
\end{proof}

\begin{figure}[t]
\begin{minipage}[t]{0.48\textwidth}
\centering
\includegraphics{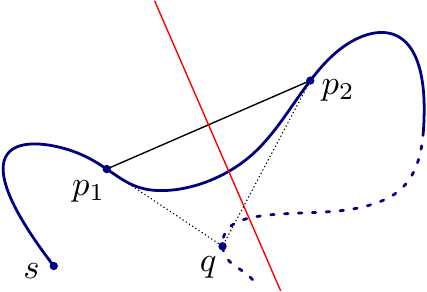}%
\caption{Perpendicular bisector to a segment connecting two points on a path does not intersect the later part of the path.}
\label{fig:bisector}%
\end{minipage}
\hfill
\begin{minipage}[t]{0.48\textwidth}
\centering
\includegraphics{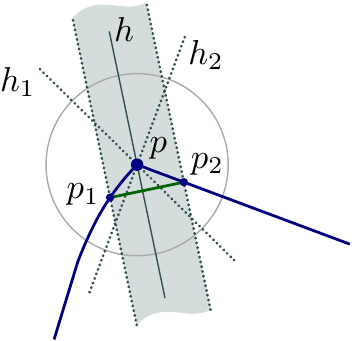}%
\caption{If a self-approaching path has a bend interior to $P$, then there exists a shortcut.}
\label{fig:sa-path-bend}%
\end{minipage}
\end{figure}

\begin{lem}
%\begin{restatable}{lem}{bendslem}
\label{lem:bends}
Bends of a shortest self-approaching path in a simple polygon $P$ form a subset of vertices of $P$.
%\end{restatable}
\end{lem}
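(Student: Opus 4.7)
I would prove this by contradiction. Assume the shortest self-approaching $s$-$t$ path $\pi$ has a bend at a point $b$ that is not a vertex of $P$; then $b$ lies in the interior of $P$ or in the relative interior of some edge, so a sufficiently small convex neighbourhood $D$ of $b$ is entirely contained in $P$. For small $\varepsilon>0$, pick points $p_1<_\pi b<_\pi p_2$ on $\pi$ at arc-length $\varepsilon$ from $b$ (both inside $D$), and let $\pi'$ be obtained from $\pi$ by replacing $\pi(p_1,p_2)$ with the straight segment $\seg{p_1p_2}$. Convexity of $D$ guarantees $\pi'\subset P$, and because $p_1,b,p_2$ are not collinear (the bend being nontrivial), the strict triangle inequality yields $\length(\pi')<\length(\pi)$.

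To derive the contradiction I must show that $\pi'$ is self-approaching, which I would verify via the half-plane property (Lemma~\ref{lem:half-plane}). If $a$ lies on one of the unchanged portions $\pi(s,p_1)$ or $\pi(p_2,t)$, the tangent, normal, and positive half-plane $h^+$ of $\pi'$ at $a$ agree with those of $\pi$; since $h^+$ is open and convex and already contains $p_1$ and $p_2$ (from the half-plane property for $\pi$), it contains the entire chord $\seg{p_1p_2}$, so $\pi'(a,t)\subset h^+$. The only nontrivial case is $a\in\seg{p_1p_2}$ with new tangent direction $\vec v=(p_2-p_1)/|p_2-p_1|$: the subsegment $\seg{ap_2}$ lies trivially in the corresponding positive half-plane, and what remains is to show $\pi(p_2,t)$ does as well.

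For this, let $\vec d_-$ and $\vec d_+$ denote the incoming and outgoing one-sided tangents of $\pi$ at $b$, and set $\vec v_0=(\vec d_++\vec d_-)/|\vec d_++\vec d_-|$. One-sided limits of the half-plane property of $\pi$ at points approaching $b$ from each side give $(x-b)\cdot\vec d_\pm\ge 0$ for every $x\in\pi(b,t)\setminus\{b\}$; moreover these inequalities cannot be simultaneous equalities in the plane, because the turning angle at the bend lies strictly between $0$ and $\pi$ (the upper bound being itself a consequence of the half-plane property applied across $b$), so $\vec d_-$ and $\vec d_+$ are linearly independent. Hence $(x-b)\cdot\vec v_0>0$ throughout $\pi(p_2,t)$. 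As $\varepsilon\to 0$, $a\to b$ and $\vec v\to\vec v_0$, so by a continuity argument on the compact set $\pi(p_2,t)$, the inequality $(x-a)\cdot\vec v>0$ holds for all $x\in\pi(p_2,t)$ once $\varepsilon$ is small enough. The main obstacle is precisely this last step: the positive signal $(x-b)\cdot\vec v_0$ can be of order $\varepsilon$ for $x$ near $p_2$, on the same scale as the perturbations $|\vec v-\vec v_0|$ and $|b-a|$, so keeping the signal above the error requires a first-order Taylor expansion of $\pi$ near $b$, exploiting that $b-a$ is essentially parallel to $\vec d_+-\vec d_-$ and hence orthogonal to $\vec d_++\vec d_-$ to leading order in $\varepsilon$.
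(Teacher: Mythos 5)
Your construction is essentially the paper's: the paper also shortcuts the bend $p$ by a chord whose direction is the bisector of the incoming and outgoing tangents (it phrases this as ``perpendicular to the bisector $h$ of the two normals $h_1,h_2$ at $p$''), and your treatment of the easy cases (points $a$ on the unchanged portions, via convexity of the positive half-plane; points $a$ on the chord versus the remainder of the chord) matches. The difference is in the one step that carries all the difficulty, namely showing that no normal to the chord crosses $\pi(p_2,t)$, and there your argument has a genuine gap --- one you partly flag yourself, but whose proposed repair does not close it. First, the claim that $b-a$ is ``essentially parallel to $\vec d_+-\vec d_-$'' is false precisely in the worst case: for $a$ near $p_2$ one has $b-a\approx-\varepsilon\,\vec d_+$, whose component along $\vec v_0$ is about $-\varepsilon\cos(\theta/2)$, where $\theta$ is the turn angle at $b$. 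Second, and more fundamentally, the dangerous points $x$ are not only those on the outgoing branch near $p_2$. The only information you use about $\pi(p_2,t)$ is the wedge property $(x-b)\cdot\vec d_\pm\ge 0$, and that is quantitatively insufficient: since $\theta\le 90^{\circ}$ at a bend, for $\theta<90^{\circ}$ there are points $x$ in the wedge, at distance $\Theta(\varepsilon)$ from $b$ and close to the wedge boundary perpendicular to $\vec d_-$ (roughly $|x-b|$ between $\varepsilon/\sin\theta$ and $\varepsilon\cot(\theta/2)$), which satisfy both wedge inequalities and even lie ahead of the normal to $\pi$ at $p_2$, yet have $(x-a)\cdot\vec v<0$ for $a$ near $p_2$. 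Neither compactness of $\pi(p_2,t)$ nor a Taylor expansion of $\pi$ at $b$ rules out that $\pi(p_2,t)$ visits such a point, because nothing you have assumed prevents the path from curving back towards $b$ after leaving $p_2$.

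What is missing is a locality argument controlling which parts of $\pi(p_2,t)$ can come close to $b$. The paper gets this by choosing the $\varepsilon$-neighbourhood of $p$ so small that it contains a \emph{single connected component} of $\pi$ (and, implicitly, that within it the two branches are nearly straight, using piecewise smoothness and the fact that the arc length inside the neighbourhood is $O(\varepsilon)$); then every chord-normal can meet the wedge only inside the neighbourhood, where $\pi(p_2,t)$ consists solely of the nearly straight outgoing branch, which stays strictly ahead of all chord-normals. Alternatively one can derive locality from the self-approaching property itself: by the defining inequality with points just before $b$, any $x\in\pi(p_2,t)$ returning to within distance $\delta$ of $b$ forces the whole subpath from $p_2$ to $x$ to stay within $O(\delta)$ of $b$, hence (by the bounded-detour property) to have arc length $O(\varepsilon)$, hence (by continuity of the tangent on the outgoing smooth piece, for $\varepsilon$ small) to be nearly straight in direction $\vec d_+$, which excludes the bad region above. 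Some ingredient of this kind --- the single-component choice of the neighbourhood, or the no-return plus near-straightness estimate --- must be added before your final step is sound; as written, the proof does not go through.
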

\begin{proof}
Suppose a shortest self-approaching $s$-$t$ path $\pi$ bends at some point $p$ interior to polygon $P$. Then consider an $\varepsilon$-neighborhood of $p$ for some small $\varepsilon$ such that it is interior to $P$, and only contains one connected component of path $\pi$. Let $h_{1}$ and $h_{2}$ be two perpendiculars to $\pi$ at point $p$. Let $h$ be a bisector of an angle formed by $h_{1}$ and $h_{2}$ (as in Figure~\ref{fig:sa-path-bend}). Then, construct a segment $\seg{p_{1}p_{2}}$, perpendicular to $h$, such that point $p_{1}\in\pi(s,p)$, point $p_{2}\in\pi(p,q)$, and two lines parallel to $h$ that pass through $p_{1}$ and $p_{2}$ intersect $h_{1}$ and $h_{2}$ inside the $\varepsilon$-neighborhood of $p$. By the half-plane property, the subpath $\pi(p,t)$ lies completely inside the intersection of two positive half-planes $h_{1}^{+}\cap h_{2}^{+}$. And, because the $\varepsilon$-neighborhood of $p$ contains only one connected component of $\pi$, none of the normal lines to $\seg{p_{1}p_{2}}$ intersects $\pi(p_{2},t)$. Therefore, $\pi(s,p_{1})\oplus \seg{p_{1}p_{2}}\oplus\pi(p_{2},t)$ is self-approaching and is shorter than $\pi$.
\end{proof}

\begin{figure}[t]
\centering
\includegraphics{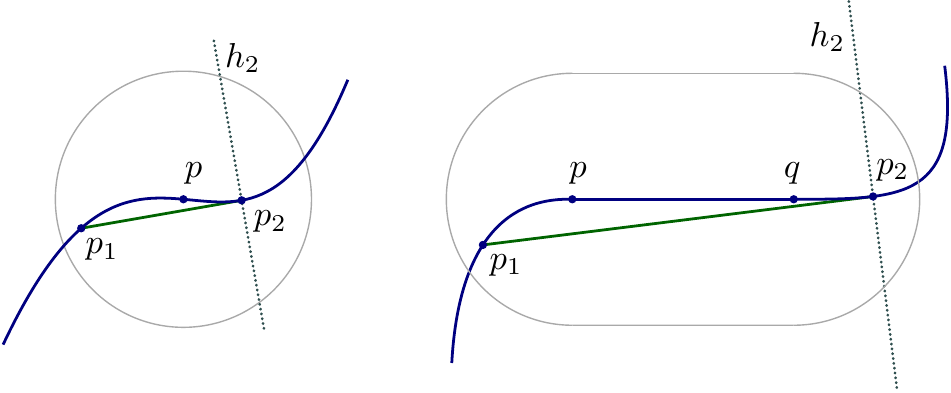}%
\caption{If a self-approaching path has an inflection point (or a segment) interior to $P$, then there exists a shortcut.}
\label{fig:sa-path-infl}%
\end{figure}

Thus, any point of a shortest self-approaching $s$-$t$ path which is interior to $P$ has a well-defined tangent. This point is an \emph{inflection point}, if its tangent separates the self-approaching path in a small enough $\varepsilon$-neighborhood. We can also introduce a notion of an \emph{inflection segment} for a path that contains a straight-line segment as a sub-path. A straight-line segment of a smooth path is an inflection segment if its supporting line separates the path in a small enough $\varepsilon$-neighborhood around the segment (refer to Figure~\ref{fig:sa-path-infl}).

\begin{lem}\label{lem:inflection}
A shortest self-approaching $s$-$t$ path in a simple polygon $P$ cannot have an inflection point (or an inflection segment) that is interior to $P$.
\end{lem}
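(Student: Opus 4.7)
The plan is a proof by contradiction: assume a shortest self-approaching $s$-$t$ path $\pi$ has an inflection point $p$ strictly interior to $P$, and exhibit a strictly shorter self-approaching path by shortcutting across $p$. The inflection-segment case is analogous, applied to any interior point of the segment (the supporting line of the segment playing the role of the tangent at an inflection point). By Lemma~\ref{lem:bends} the path $\pi$ has no bends in the interior of $P$, so it is smooth at $p$ with a well-defined tangent line $\ell$; by definition of an inflection point $\ell$ locally separates $\pi(s,p)$ from $\pi(p,t)$ inside some small disk $N$ around $p$.

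First I would pick $N$ small enough that $N\subset P$ and $\pi\cap N$ is a single smooth arc. Then I would choose two points $p_1<_\pi p<_\pi p_2$ lying in $N$, strictly on opposite sides of $\ell$, and close enough to $p$ that the direction of the chord $\seg{p_1 p_2}$ is within an arbitrarily small angle $\delta$ of the direction of $\ell$. Define $\pi' = \pi(s,p_1)\oplus \seg{p_1 p_2}\oplus \pi(p_2,t)$. The chord lies in $N$ and hence in $P$; and because $\pi(p_1,p_2)$ is a non-straight smooth arc crossing $\ell$ (otherwise $p$ would not be an inflection point), $|\seg{p_1 p_2}|<\length(\pi(p_1,p_2))$, so $\pi'$ is strictly shorter than $\pi$.

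Next I would verify that $\pi'$ is self-approaching using the half-plane property (Lemma~\ref{lem:half-plane}) at every point $a\in\pi'$. For $a\in\pi(s,p_1)$ the normal to $\pi'$ at $a$ coincides with the normal to $\pi$ at $a$; since both $p_1$ and $p_2$ lie in the open half-plane $h_a^+$ by the original normal property of $\pi$, convexity of $h_a^+$ gives $\seg{p_1 p_2}\subset h_a^+$, and $\pi(p_2,t)\subset h_a^+$ is inherited from $\pi$. For $a\in\pi(p_2,t)$ the normals and half-planes are unchanged from $\pi$, so the condition is preserved verbatim.

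The main obstacle is the case $a\in\seg{p_1 p_2}$, where the chord-normal at $a$ is perpendicular to the chord and must avoid $\pi(p_2,t)$. I would exploit that, because the chord direction lies within angle $\delta$ of $\ell$ and the chord itself lies in a sub-disk of $N$ of radius $O(\delta)$, every chord-normal is an $O(\delta)$ perturbation (in direction and in position) of the normal to $\pi$ at $p$. Outside $N$, the subpath $\pi(p,t)$ is bounded away from this reference normal by the original normal property of $\pi$, so for $\delta$ small enough the perturbed chord-normals still have $\pi(p_2,t)\setminus N$ in their positive half-plane. Inside $N$, the subpath $\pi(p_2,t)\cap N$ lies on the $p_2$-side of $\ell$; combined with the normal property applied to $\pi$ at $p_2$ itself (which pushes the tail of $\pi$ strictly past a near-vertical normal there), this subpath also lies on the positive side of every chord-normal. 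Combining the two regions yields the half-plane property along the chord, producing a strictly shorter self-approaching path and contradicting the minimality of $\pi$.
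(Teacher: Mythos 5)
Your overall strategy (shortcut the inflection by a chord $\seg{p_1p_2}$ with $p_1<_\pi p<_\pi p_2$ and verify the half-plane property, noting that only the normals along the chord need new work) is the same as the paper's, and your treatment of $a\in\pi(s,p_1)$ and $a\in\pi(p_2,t)$ is fine. The gap is in the case $a\in\seg{p_1p_2}$, and it is concentrated at and near $p_2$. For points of $\pi(p_2,t)$ outside $N$ your perturbation idea can be made rigorous (the half-plane property at $p$ plus compactness of $\pi(p,t)\setminus N$ and boundedness of $P$ give a positive margin that survives an $O(\delta)$ tilt and $O(\varepsilon)$ translation), but inside $N$ the argument does not go through. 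The chord-normal at $p_2$ and the path's own normal at $p_2$ both pass through $p_2$ and differ by a nonzero angle, so the margin you would need degenerates to zero there: there is a thin wedge with apex $p_2$ consisting of points that lie strictly in the positive half-plane of $\pi$'s normal at $p_2$ yet strictly behind the chord-normal at $p_2$. Your two cited facts do not exclude $\pi(p_2,t)\cap N$ from this wedge: lying on the $p_2$-side of $\ell$ constrains the path relative to a line nearly \emph{parallel} to the chord and says nothing about its position relative to lines \emph{perpendicular} to the chord, and since $p_2$ is strictly off $\ell$ the offending wedge (which opens roughly perpendicular to the chord, toward or away from $\ell$ depending on the sign of the angle between $\vec{t}(p_2)$ and the chord) lies on the allowed side of $\ell$. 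So "close enough to $p$" alone does not make the shortcut self-approaching; the constraint at the chord endpoint is not a small perturbation of anything you control. (The perpendicular-bisector lemma only protects normals up to the chord's midpoint, not up to $p_2$.)

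The paper closes exactly this hole by choosing $p_2$ non-generically: take $p_1\in\pi(s,p)$ close to $p$ and let $\seg{p_1p_2}$ be the \emph{tangent} from $p_1$ to the branch $\pi(p,t)$ inside the $\varepsilon$-neighborhood, with $p_2$ the tangency point. Then the chord-normal at $p_2$ coincides with the normal $h_2$ to $\pi$ at $p_2$, so the self-approaching property of $\pi$ gives $\pi(p_2,t)\subset h_2^{+}$ directly, and every other chord-normal is a parallel line on the $p_1$ side whose positive half-plane contains $h_2^{+}$; no limiting or margin argument is needed, inside or outside $N$. To repair your proof, replace your generic choice of $p_2$ by this tangency choice (the existence of such a tangent from a nearby $p_1$ is exactly what the inflection at $p$ guarantees); as written, the key step is unjustified and, for an arbitrary nearby pair $p_1,p_2$, false.
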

\begin{proof}
Suppose a shortest self-approaching $s$-$t$ path $\pi$ has an inflection point $p$ (or an inflection segment $\seg{pq}$) interior to polygon $P$. Consider an $\varepsilon$-neighborhood of $p$ (or $\seg{pq}$) for some small $\varepsilon$ such that it is interior to polygon $P$, and it does not contain other inflection points. Choose a point $p_{1}$ on subpath $\pi(s,p)$ close to $p$ and draw a tangent through it to a subpath of $\pi(p,t)$ contained in the $\varepsilon$-neighborhood (refer to Figure~\ref{fig:sa-path-infl}). Let $p_{2}$ be the tangent point. We can always choose point $p_{1}$ to be such that segment $\seg{p_{1}p_{2}}$ lies inside the $\varepsilon$-neighborhood. Let $h_{2}$ be the normal line to $\pi$ drawn through $p_{2}$. Because $\pi$ is self-approaching, the subpath $\pi(p_{2},t)$ lies in the positive half-plane $h_{2}^{+}$. Therefore, none of the normal lines to segment $\seg{p_{1}p_{2}}$ intersects subpath $\pi(p_{2},t)$. Thus, $\pi(s,p_{1})\oplus \seg{p_{1}p_{2}}\oplus\pi(p_{2},t)$ is self-approaching and is shorter than $\pi$.
\end{proof}

Define the \emph{inflection} points of a directed geodesic path $\gamma$ from $s$ to $t$ as the first points of the inflection segments of $\gamma$, \ie, the set of last points in the maximal subchains of $\gamma$ with the same direction of turn. 

\begin{lem}\label{lem:geodesic-inflection-points}
A shortest self-approaching path from $s$ to $t$ in a simple polygon $P$ contains all the inflection points of the geodesic path from $s$ to $t$.
\end{lem}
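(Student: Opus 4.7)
The plan is to argue by contradiction: suppose $\pi$ is a shortest self-approaching $s$-$t$ path in $P$ that omits some inflection vertex $v$ of the geodesic $\gamma$. Let $u$ and $w$ be the vertices of $\gamma$ immediately preceding and following $v$ (taking $u=s$ or $w=t$ in the boundary cases), and, without loss of generality, assume $\gamma$ turns right at $v$ and left at $w$, so that $\overline{vw}$ is the inflection segment whose first point is $v$.

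First I would nail down the local picture at $v$. Since $\gamma$ wraps around $v$, the vertex is reflex in $P$ and the two polygon edges incident to $v$ both lie inside the exterior wedge at $v$ (the $<180^\circ$ wedge opposite to the reflex interior angle). Consequently, any path contained in $P$ that avoids $v$ must deviate onto the reflex-interior side. Moreover, since the turn direction of $\gamma$ at $v$ is opposite to the one at $w$, the inflection segment $\overline{vw}$ has the polygon boundary pressed against it from opposite sides at its two endpoints, which will constrain the shape of $\pi$ near $v$.

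Second, I would exhibit a strictly shorter self-approaching path, contradicting minimality of $\pi$. Pick two points $p_1<_\pi p_2$ where $\pi$ enters and leaves a small $\varepsilon$-neighborhood of $v$ on the reflex side, and replace the subpath $\pi(p_1,p_2)$ by a path through $v$ — essentially the polyline $\overline{p_1v}\oplus\overline{vp_2}$, optionally smoothed by a short circular arc around $v$, in the style of the shortcut constructions of Lemmas~\ref{lem:bends} and~\ref{lem:inflection}. The reflex geometry at $v$ guarantees this replacement stays inside $P$, and the triangle inequality provides a length decrease. To verify that the new path $\pi(s,p_1)\oplus\overline{p_1v}\oplus\overline{vp_2}\oplus\pi(p_2,t)$ (or its smoothing) is self-approaching, I would use the half-plane property (Lemma~\ref{lem:half-plane}): the only normals to check are those on the newly inserted pieces, and for $p_1,p_2$ close enough to $v$ on the reflex side, their positive half-planes contain $\pi(p_2,t)$, inheriting from the fact that $\pi$ itself was self-approaching.

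The main obstacle lies in the third step, namely controlling the turn angle at $v$ in the shortcut. Naively squeezing $p_1,p_2$ toward $v$ forces the bend at $v$ toward $180^\circ$, which is inadmissible for a self-approaching path. To get around this, I would choose the approach directions of $\overline{p_1v}$ and $\overline{vp_2}$ guided by the incoming/outgoing geodesic directions $\overline{uv}$ and $\overline{vw}$ rather than by $\pi$'s local tangents (which is where the inflection structure of $\gamma$ is essential), and, if needed, bound the shortcut's bend angle by inserting a short arc tangent to both approach directions inside the reflex wedge. The bulk of the verification is a geometric case analysis that combines the local polygon picture at $v$ with the half-plane property to rule out normals of the new construction crossing $\pi(p_2,t)$.
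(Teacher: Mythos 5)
Your overall strategy (a local shortcut \emph{through} the omitted inflection vertex $v$, justified by the triangle inequality) does not work, and it also differs from what is actually needed. First, you assume that the shortest self-approaching path $\pi$ enters a small $\varepsilon$-neighborhood of $v$; nothing forces this. If $\pi$ omits $v$, it may cross the inflection segment (or the chord supporting it) far from $v$, so the points $p_1,p_2$ you want to pick need not exist. Second, and more fundamentally, the triangle inequality points the wrong way: replacing $\pi(p_1,p_2)$ by $\overline{p_1v}\oplus\overline{vp_2}$ inserts an extra vertex off the path, and $|p_1v|+|vp_2|\ge|p_1p_2|$, so if $\pi$ is locally straight (or nearly so) and merely misses $v$ by $\varepsilon$ on the interior side, your replacement is strictly \emph{longer}, not shorter. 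There is no a priori lower bound on the length of $\pi(p_1,p_2)$ that would make the detour through $v$ a shortening, so the contradiction with minimality never materializes. This is unlike the shortcuts in Lemmas~\ref{lem:bends} and~\ref{lem:inflection}, which shorten by replacing a curved piece with a straight chord; your modification does the opposite.

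The reason the shortest path must contain $v$ is not metric but structural, and the paper's argument reflects this: any shortest self-approaching path must cross the inflection segment $\overline{p_ip_j}$ of the geodesic, and if the crossing point were not the inflection point $p_i$, the path would have to switch its direction of turning at a point interior to $P$, i.e., it would contain an interior inflection point, contradicting Lemma~\ref{lem:inflection}. The shorter self-approaching path that witnesses non-minimality in that contradiction is produced by the straightening chord of Lemma~\ref{lem:inflection}, located at the interior inflection of $\pi$ --- not by routing through $v$. If you want to keep a self-contained argument, you should replace your step two entirely: show that $\pi$ crosses the inflection segment, and derive the interior inflection from the opposite turning directions of $\gamma$ (and hence of $\pi$, which is pressed against the boundary on opposite sides) before and after that segment; your step three (controlling a bend angle at $v$) then becomes unnecessary, since no bend at $v$ is ever introduced.
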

\begin{proof}
Consider an inflection segment $\seg{p_{i}p_{j}}$ of the geodesic path $\gamma$ from $s$ to $t$, $p_{i}$ is one of its inflection points. Any shortest self-approaching path $\pi$ intersects $\seg{p_{i}p_{j}}$. If the intersection point were not $p_{i}$, then $\pi$ would contain an inflection point that is interior to $P$, but this would contradict Lemma~\ref{lem:inflection}.
\end{proof}

Consider two self-approaching paths $\pi_1$ and $\pi_2$ from $s$ to $t$ in a simple polygon $P$ that do not have other points in common. Let $\gamma$ be a geodesic path from $s$ to $t$ inside the area bounded by $\pi_1$ and $\pi_2$. Then, the following lemma holds.
\begin{lem}\label{lem:geodesic}
A geodesic path $\gamma$ between two self-approaching paths $\pi_1$ and $\pi_2$ is also self-approaching.
\end{lem}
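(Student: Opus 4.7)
The plan is to verify the half-plane characterization of Lemma~\ref{lem:half-plane} for $\gamma$ by contradiction. Suppose for contradiction that $\gamma$ is not self-approaching; then Lemma~\ref{lem:half-plane} yields a point $a\in\gamma$ and a normal $h_a$ to $\gamma$ at $a$ such that $\gamma(a, t)$ enters the open negative half-plane $h_a^-$. Because the tangent of $\gamma$ at $a$ is perpendicular to $h_a$ and points into $h_a^+$, the curve $\gamma(a,\cdot)$ enters $h_a^+$ immediately after $a$, so by continuity it must first return to the line $h_a$ at some point $d\in\gamma(a,t)\setminus\{a\}$. The chord $\seg{ad}$ then lies on the line $h_a$ and $\gamma(a,d)\subseteq\overline{h_a^+}$. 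Since the tangent of $\gamma$ at $a$ is perpendicular to $\seg{ad}$, the sub-path $\gamma(a,d)$ cannot coincide with the straight segment $\seg{ad}$, and hence $\length(\gamma(a,d))>|ad|$.

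The key step is then to show that $\seg{ad}$ is contained in the region $D$ bounded by $\pi_1\cup\pi_2$. Given this, the concatenation $\gamma(s,a)\oplus\seg{ad}\oplus\gamma(d,t)$ is a path from $s$ to $t$ in $D$ strictly shorter than $\gamma$, contradicting that $\gamma$ is a geodesic inside $D$.

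I expect the main obstacle to be proving $\seg{ad}\subseteq D$. The plan is to suppose the contrary: $\seg{ad}$ leaves $D$ and first meets $\partial D$ at a point $e$, without loss of generality $e\in\pi_1$. Applying Lemma~\ref{lem:half-plane} to the self-approaching path $\pi_1$ at $e$ confines $\pi_1(e,t)$ to the closed positive half-plane of the normal to $\pi_1$ at $e$. Combining this with the facts that $a$ lies on the $D$-side of $\pi_1$ near $e$, that $d$ is further along $h_a$ past $e$, and that $\gamma(a,d)$ is itself the shortest path from $a$ to $d$ in $D$ whose initial tangent is perpendicular to $\seg{ad}$, I would run a case analysis on the orientations of $\pi_1$'s tangents at the successive crossings of $\seg{ad}$ with $\partial D$ (and on whether those crossings use $\pi_1$ or $\pi_2$) to force the obstructing bulge of $\partial D$ across $h_a$ to contradict either the self-approaching property of $\pi_1$ (or $\pi_2$) or the taut minimality of $\gamma(a,d)$. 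This case-analytic geometric argument is where I expect the technical bulk of the proof to lie.
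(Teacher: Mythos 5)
Your reduction to a shortcut argument is set up sensibly (if $\gamma$ violated the half-plane property at $a$, a later point $d$ of $\gamma$ would return to the normal line $h_a$, and since the tangent at $a$ is perpendicular to $h_a$, the chord $\seg{ad}$ would be strictly shorter than $\gamma(a,d)$, contradicting geodesic minimality \emph{provided} $\seg{ad}$ stays inside the region $D$ bounded by $\pi_1\cup\pi_2$). But that proviso is exactly where the content of the lemma lies, and your proposal does not prove it: the ``case analysis on the orientations of $\pi_1$'s tangents at the successive crossings of $\seg{ad}$ with $\partial D$'' is announced, not carried out. For a geodesic between two arbitrary curves the chord between two of its points routinely leaves $D$ (that is the only reason geodesics are not straight), so the containment must be extracted from the self-approaching property of $\pi_1$ and $\pi_2$, and nothing in the sketch actually does this; as written, the argument establishes only the easy implication ``violation $+$ chord inside $D$ $\Rightarrow$ contradiction.'' There is also a minor boundary issue in the setup: the negation of the half-plane property only gives a later point in the \emph{closed} negative half-plane, and if $\gamma$ leaves $a$ along $h_a$ itself (possible at a bend of $\gamma$, where a whole cone of normals must be checked), the chord $\seg{ad}$ may coincide with $\gamma(a,d)$ and yield no length gain, so the degenerate cases need separate treatment.

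For comparison, the paper avoids the contradiction/shortcut route entirely and verifies the half-plane property directly, using the structure of a geodesic between two curves: every point $p\in\gamma$ lies on $\pi_1$, on $\pi_2$ (possibly at a bend), or on a bitangent segment whose endpoint touches one of them with a parallel normal. At a touching point the normal to $\gamma$ is (or is sandwiched between) normals to the bounding path, whose positive half-plane contains $\pi_i(p,t)$ and hence its convex hull; since $\gamma(p,t)\subset\ch(\pi_i(p,t))$, the half-plane property for $\gamma$ follows, and the bitangent case reduces to the touching case by parallelism of normals. If you want to rescue your approach, the convex-hull containment $\gamma(p,t)\subset\ch(\pi_i(p,t))$ is precisely the tool you are missing: it is what lets the self-approaching property of the bounding paths control the geodesic, and once you invoke it you will find the direct verification is shorter than completing your case analysis.
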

\begin{proof}
\begin{figure}[t]
\centering
\includegraphics{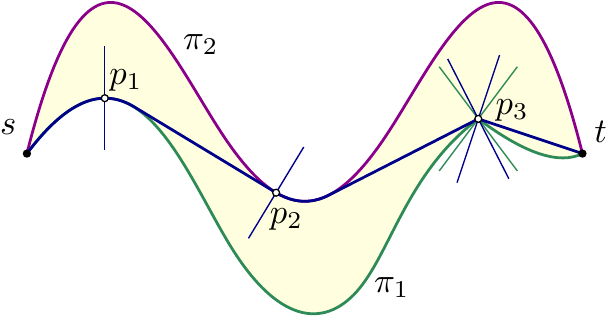}%
\caption{A geodesic bounded between two self-approaching $s$-$t$ paths is also self-approaching.}
\label{fig:geodesic}%
\end{figure}
We use the fact that the geodesic lies inside of the convex hull of each side of the boundaries between which it is constrained, \ie, $\gamma\subset\ch(\pi_1)$ and $\gamma\subset\ch(\pi_2)$.

Any point $p\in\gamma$ either lies on one of the paths $\pi_1$ and $\pi_2$ or on a straight line segment that is bitangent to the boundary (refer to Figure~\ref{fig:geodesic}).

Consider the case when $p$ lies on $\pi_{1}$ or $\pi_{2}$, and is not a bend point (as point $p_1$ in the figure). Let, without loss of generality, $p\in\pi_1$. The positive half-plane $h^{+}$ of the normal to $\pi_1$ at $p$ contains the rest of the path $\pi_1(p,t)$. Therefore it contains the convex hull of $\pi_1(p,t)$, and the subpath $\gamma(p,t)$ of the geodesic.

When $p$ lies on a path $\pi_{1}$ and is a bend point, the two normals to the path at $p$ define two positive half-planes whose intersection contains the rest of the path from $p$ to $t$. The two normals to the geodesic path at this point will lie in between the to normals to the boundary path (as in the figure for point $p_3$). Thus, the intersection of the two positive half-planes of the normals to the geodesic contains the convex hull of the subpath from $s$ to $t$, and, therefore, the rest of the geodesic path $\gamma(p,t)$.

In the case when $p$ lies on a bitangent, consider its end point $p_2$. The normal to $\gamma$ at $p$ is parallel to the normal to $\gamma$ at $p_2$. By one of the cases considered above, the positive half-plane at $p_2$ (or the intersection of two positive half-planes) will contain $\gamma(p_2,t)$, and, therefore, the positive half-plane of normal to $\gamma$ at point $p$ will contain the subpath $\gamma(p,t)$.

Thus, by the half-plane property, $\gamma$ is self-approaching.
\end{proof}
As a corollary to this lemma, for two self-approaching paths from $s$ to $t$, a path, composed of geodesics in the areas bounded by subpaths of the two paths between each pair of consecutive intersection points, is also self-approaching. In other words, let $s=p_0,p_1,\dots,p_k,p_{k+1}=t$ be all the intersection points of $\pi_{1}$ and $\pi_{2}$ in the order they appear on $\pi_{1}$ and $\pi_{2}$. Observe, that the intersection points must appear in the same order along the both paths, otherwise there would exist three points on one of these paths for which the inequality in the definition of a self-approaching path would not be satisfied. Let $\gamma_{i}$ be the geodesic from $p_{i}$ to $p_{i+1}$ in the area between two subpaths $\pi_{1}(p_{i},p_{i+1})$ and $\pi_{2}(p_{i},p_{i+1})$. Then,
\begin{lem}\label{lem:geodesic-2}
The concatenation of the geodesics $\gamma=\gamma_{0}\oplus\gamma_{1}\oplus\cdots\oplus\gamma_{k}$ is self-approaching.
\end{lem}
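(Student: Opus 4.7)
The plan is to verify the half-plane property of Lemma~\ref{lem:half-plane} pointwise along $\gamma$, essentially repeating the argument of Lemma~\ref{lem:geodesic} while accounting for the junctions $p_i$ at which $\gamma$ moves from one lens to the next.

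First I would fix an arbitrary smooth point $a$ of $\gamma$, which lies in the interior of some $\gamma_i$. By the same structural description of geodesics used in Lemma~\ref{lem:geodesic}, the point $a$ lies either on $\pi_j$ for some $j\in\{1,2\}$, or on a bitangent segment of $\gamma_i$. When $a\in\pi_j$, the tangent to $\gamma$ at $a$ coincides with that of $\pi_j$, so the normal $h$ to $\gamma$ at $a$ is also a normal to $\pi_j$ at $a$. The self-approaching property of $\pi_j$ then gives $\pi_j(a,t)\subseteq h^+$, and hence $\ch(\pi_j(a,t))\subseteq h^+$. The core containment I would then establish is
\[
\gamma(a,t)=\gamma_i(a,p_{i+1})\oplus\gamma_{i+1}\oplus\cdots\oplus\gamma_k\;\subseteq\;\ch(\pi_j(a,t)),
\]
piece by piece: each later $\gamma_\ell$ lies in $\ch(\pi_j(p_\ell,p_{\ell+1}))\subseteq\ch(\pi_j(a,t))$ by Lemma~\ref{lem:geodesic} applied to the $\ell$-th lens, and the initial piece $\gamma_i(a,p_{i+1})$ is itself a geodesic from $a$ to $p_{i+1}$ inside the $i$-th lens, hence contained in $\ch(\pi_j(a,p_{i+1}))\subseteq\ch(\pi_j(a,t))$.

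Next, if $a$ lies on a bitangent segment with forward endpoint $q\in\pi_1\cup\pi_2$, I would reduce to the previous case by shifting the normal: the tangent at $a$ equals the tangent at $q$, so the two normals are parallel, $h_q^+\subseteq h_a^+$, and the bitangent segment from $a$ to $q$ lies in $h_a^+$. For bends of $\gamma$, whether internal to some $\gamma_i$ or created at a junction $p_i$, I would obtain the half-plane property in the intersected form of Lemma~\ref{lem:geodesic}, by approaching the bend from either side with smooth points and taking limits; at a junction $p_i$ this relies on the fact that both $\pi_1$ and $\pi_2$ pass through $p_i$ and are self-approaching, so the incoming tangent from $\gamma_{i-1}$ and the outgoing tangent of $\gamma_i$ each yield a normal whose positive half-plane contains the forward subpath.

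The main obstacle I anticipate is verifying the containment $\gamma_i(a,p_{i+1})\subseteq\ch(\pi_j(a,p_{i+1}))$ for the initial piece in the smooth case: this requires recognising the subpath from $a$ to $p_{i+1}$ of $\gamma_i$ as a geodesic in a sub-lens whose boundary sides are $\pi_j(a,p_{i+1})$ and the concatenation of $\pi_j(p_i,a)$ reversed with $\pi_{3-j}(p_i,p_{i+1})$, so that the convex-hull fact underlying Lemma~\ref{lem:geodesic} can be re-applied. A secondary subtlety is the bend analysis at junctions, where the outgoing direction of $\gamma_i$ need not match the incoming direction of $\gamma_{i-1}$.
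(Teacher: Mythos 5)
Your proposal is correct and takes essentially the same route as the paper's (very terse) proof: verify the half-plane property of Lemma~\ref{lem:half-plane} pointwise, showing that the positive half-plane of a normal to $\gamma_i$ at $p$ contains $\ch(\pi_j(p,t))$ for the appropriate $j$ and that $\gamma(p,t)$ is contained in that convex hull. Your sub-lens argument for the initial piece $\gamma_i(a,p_{i+1})$ and the junction/bend analysis merely supply details that the paper leaves implicit behind ``by a similar argument as in Lemma~\ref{lem:geodesic}.''
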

\begin{proof}
By a similar argument as in Lemma~\ref{lem:geodesic}, for any normal to $\gamma_{i}$ at point $p$, its positive half-plane either contains the convex hull of $\pi_{1}(p,t)$, or it contains the convex hull of $\pi_{2}(p,t)$. In both cases, that implies that the subpath $\gamma(p,t)$ lies in the positive half-plane of the normal. Therefore, $\gamma$ is self-approaching.
\end{proof}
%
%\begin{lem}\label{lem:convex-hull}
%For a given polygon $P$ and points $s$ and $t$ in it, $\ch(\sa(s,t))\subseteq \ch(\pi)$ for any self-approaching $s$-$t$ path $\pi$.
%\end{lem}
%\begin{restatable}{lem}{convexhull}
%\label{lem:convex-hull}
%For a given polygon $P$ and points $s$ and $t$ in it, the convex hull of the shortest self-approaching path $\pi^{*}$ is contained in the convex hull of any self-approaching $s$-$t$ path $\pi$. That is $\ch(\pi^{*})\subseteq\ch(\pi)$.
%\end{restatable}
%
The next theorem is a direct corollary of Lemma~\ref{lem:geodesic-2}.
\begin{theorem}\label{thm:unique}
A shortest self-approaching $s$-$t$ path is unique.
\end{theorem}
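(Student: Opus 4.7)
The plan is to argue by contradiction using Lemma~\ref{lem:geodesic-2} as a length-reducing surgery. Suppose $\pi_1$ and $\pi_2$ are two distinct shortest self-approaching $s$-$t$ paths, so in particular $\length(\pi_1)=\length(\pi_2)$. List the intersection points $s=p_0,p_1,\dots,p_k,p_{k+1}=t$ in the order they appear along both paths (as noted just before Lemma~\ref{lem:geodesic-2}, this common ordering is forced, since otherwise one of the paths would violate the self-approaching inequality on a triple of points). Since $\pi_1\neq\pi_2$, there exists some index $i$ such that the subpaths $\pi_1(p_i,p_{i+1})$ and $\pi_2(p_i,p_{i+1})$ differ, and together they bound a region of positive area.

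For each $i$, let $\gamma_i$ be the geodesic from $p_i$ to $p_{i+1}$ in the region enclosed by $\pi_1(p_i,p_{i+1})$ and $\pi_2(p_i,p_{i+1})$. By Lemma~\ref{lem:geodesic-2}, the concatenation $\gamma=\gamma_0\oplus\gamma_1\oplus\cdots\oplus\gamma_k$ is a self-approaching $s$-$t$ path inside $P$. Since $\gamma_i$ is the geodesic in a region whose boundary contains both subpaths, one has $\length(\gamma_i)\le\length(\pi_1(p_i,p_{i+1}))$ and $\length(\gamma_i)\le\length(\pi_2(p_i,p_{i+1}))$ for every $i$.

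For the index $i$ at which the subpaths differ, at most one of the two subpaths can itself be the geodesic of that region, so at least one of the inequalities above is strict. Summing over all $i$ and using $\length(\pi_1)=\length(\pi_2)$, this gives $\length(\gamma)<\length(\pi_1)$ or $\length(\gamma)<\length(\pi_2)$, hence $\length(\gamma)$ is strictly smaller than the common length of the two assumed shortest paths. This contradicts the minimality of $\pi_1$ and $\pi_2$, so no two distinct shortest self-approaching $s$-$t$ paths can exist.

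The only real subtlety, and hence the step to state carefully, is why the strict inequality is guaranteed at an index where the two paths differ: one must rule out the degenerate possibility that both subpaths are simultaneously geodesics in the enclosed region, which follows from the fact that a geodesic inside a simply connected region between two fixed endpoints is unique. Once that observation is made, everything else is just length comparison on top of Lemma~\ref{lem:geodesic-2}.
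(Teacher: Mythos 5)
Your argument is correct and is essentially the proof the paper intends: the paper states Theorem~\ref{thm:unique} as a direct corollary of Lemma~\ref{lem:geodesic-2}, namely exactly the geodesic-surgery-plus-strict-length-decrease argument you spell out. Your added remark on uniqueness of the geodesic in the enclosed region is the right way to justify the strict inequality, so nothing is missing.
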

%\begin{proof}
%\begin{figure}[t]
%\centering
%\includegraphics{unique}%
%\caption{Illustration to Theorem~\ref{thm:unique}}
%\label{fig:unique}%
%\end{figure}
%Assume there exist two shortest paths $\pi_1$ and $\pi_2$ from $s$ to $t$ in a polygon $P$. Then, by Lemma~\ref{lem:geodesic-2}, there exists a shorter path that.
%
%Let $\pi_1$ and $\pi_2$ intersect at points $p$ and $q$ such that $\pi_1=\pi_1(s,p)\oplus\pi_1(p,q)\oplus\pi_1(q,t)$ and $\pi_2=\pi_2(s,p)\oplus\pi_2(p,q)\oplus\pi_2(q,t)$, where $\pi_1(q,t)\equiv\pi_2(q,t)$, and subpaths $\pi_1(p,q)$ and $\pi_2(p,q)$ do not have points in common other than $p$ and $q$. Then consider the geodesic $\gamma$ between $\pi_1(p,q)$ and $\pi_2(p,q)$. If $\pi_1(p,q)\not\equiv\pi_2(p,q)$, then $\gamma$ must be shorter than at least one of the subpaths $\pi_1(p,q)$ or $\pi_2(p,q)$. And thus, using similar arguments to Lemmas~\ref{lem:geodesic} and~\ref{lem:convex-hull}, one of the shortest self-approaching paths can be shortened.
%\end{proof}
%
%
Figure~\ref{fig:shortest-path} shows an example of a shortest self-approaching path inside a polygon. In the next theorem we give its characterization.
\begin{theorem}\label{thm:shortest-path}
A shortest self-approaching $s$-$t$ path in a simple polygon consists of straight segments, circular arcs and circle involutes of some order.
\end{theorem}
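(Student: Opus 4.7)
The plan is to combine the earlier structural lemmas with a local variational ``tightness'' argument. By Lemma~\ref{lem:bends} all bends of the shortest self-approaching path $\pi$ lie at polygon vertices, and by Lemma~\ref{lem:inflection} no interior inflection point or inflection segment occurs inside $P$. Hence $\pi$ decomposes into a finite sequence of maximal smooth sub-arcs with a consistent turning direction, and it suffices to describe the shape of one such sub-arc $\rho$.

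At every point $p\in\rho$ the self-approaching condition (Lemma~\ref{lem:half-plane}) forces $\pi(p,t)\subset h^{+}(p)$, where $h(p)$ is the normal line to $\pi$ at $p$. I would argue that, for a \emph{shortest} path, this half-plane constraint must be tight at almost every interior $p\in\rho$: if on some open sub-arc no later point of $\pi$ lies on $h(p)$, then the ``shortcut'' construction used in the proofs of Lemma~\ref{lem:bends} and Lemma~\ref{lem:inflection} still applies locally and strictly shortens $\pi$ while preserving both polygon containment and the half-plane property. So for almost every interior $p\in\rho$ there exists a witness point $q(p)\in\pi(p,t)\cap h(p)$.

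The shape of $\rho$ is then dictated by the behaviour of $p\mapsto q(p)$. First, if no witness exists on an open sub-interval (the constraint is loose), the same shortcut perturbation straightens $\rho$ there, so $\rho$ is a straight-line segment. Second, if $q(p)$ equals a fixed polygon vertex $v$ throughout a sub-interval, then $h(p)$ passes through $v$ for all such $p$ and $v$ is the instantaneous centre of curvature of $\rho$, so $\rho$ is a circular arc centred at $v$. Third, if $q(p)$ varies over a smooth later sub-arc $\sigma$ of $\pi$, then since $\pi(p,t)\subset h^{+}(p)$ and $q(p)$ lies on the boundary $h(p)$, the curve $\sigma$ must be tangent to $h(p)$ at $q(p)$; equivalently, the normal to $\rho$ at $p$ is tangent to $\sigma$ at the associated foot $q(p)$. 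This is exactly the differential condition characterizing an involute, so $\rho$ is an involute of $\sigma$.

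Applying the same three-way classification recursively to the base sub-arc $\sigma$ gives a finite recursion, because each step descends to a strictly later smooth piece of $\pi$ and there are finitely many such pieces. At the innermost level only the straight-segment or circular-arc cases can occur; since an involute of a straight line is again a straight line, the non-trivial recursions bottom out at a circular arc. Consequently every smooth sub-arc of $\pi$ is a straight segment, a circular arc, or an involute of order $k\ge 1$ of a circular arc, matching the formulas in Section~\ref{sec:prelim}. The main obstacle is the variational step: making the tightness argument rigorous so that $p\mapsto q(p)$ is well-defined and sufficiently smooth on open sub-intervals, and verifying that the derived tangency is sufficient (not merely necessary) to identify $\rho$ with a genuine involute of $\sigma$. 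Some additional care is needed to control transitions between the three cases, to rule out pathological alternations, and to glue the local descriptions into a global one consistent with the polygon boundary and with the geodesic inflection points enforced by Lemma~\ref{lem:geodesic-inflection-points}.
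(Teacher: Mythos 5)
Your three-way classification of the smooth pieces (straight where the normal constraint is inactive, a circular arc about a fixed witness point, an involute where the witness slides along a later smooth piece) does match what the theorem asserts, and the envelope observation in your third case is the right differential-geometric reason why involutes appear. But the central step of your argument is a genuine gap, and you flag it yourself: the claim that on an open sub-arc where the half-plane constraint is ``loose'' the shortcut construction of Lemmas~\ref{lem:bends} and~\ref{lem:inflection} ``still applies locally'' neither is proved nor follows from those lemmas. They shortcut across a bend or an inflection, where the later path is confined to the intersection of the two positive half-planes at the junction. For a smooth strictly curved arc the analogous move is to replace a short sub-arc by its chord; the chord's normals are parallel translates, shifted forward along the path, of the path normal at some intermediate point $p^{*}$, and containment of $\pi(p_{2},t)$ in these strictly smaller half-planes is not implied by the half-plane property at $p^{*}$ or at $p_{2}$. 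You would need a quantitative ``strict looseness plus compactness'' argument, you must first say what looseness even means given that $q(p)=p$ is always a trivial witness on $h(p)$, and you must account for the polygon boundary, which can force curvature (the path touching $\partial P$) even when no normal touches the later path. The same goes for the regularity of the witness map $p\mapsto q(p)$ (that on each maximal interval it is either constant or slides smoothly along a single later piece, with finitely many transitions) and for the finiteness of the number of smooth pieces: these are exactly what makes the evolute/involute identification and your recursion legitimate, and they are assumed, not established.

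The paper sidesteps these local-variational difficulties with a global exchange argument. At each point $p_{\ell}$ where the path meets $\partial P$ it takes the convex hull $\ch(\pi^{*}(p_{\ell},t))$ of the already-characterized suffix, shows by a monotone-distance argument (not a perturbation) that the involute of this hull through $p_{\ell}$ bounds a ``dead region'' that no self-approaching continuation may enter, and then argues that the shortest path must coincide with the last segment of the geodesic from $s$ to $p_{\ell}$ in $P$ minus that dead region, which by construction is a straight segment possibly followed by a piece of the hull involute; backward induction from the terminal straight segment then yields circle involutes of bounded order. Note also that the correct local picture is unwinding around the convex hull of the entire later subpath: the critical normal may be tangent along a bridging chord of the hull whose interior is not on the path, and the circular arcs are centred at hull corners, which are bends of the path or the touch point $p_{\ell}$ itself, not only at a polygon vertex in the role you describe. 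If you want to complete your route, the dead-region lemma is essentially the missing tool: it converts your pointwise tightness intuition into a region constraint that can be combined with a geodesic exchange, which is where the strict shortening actually comes from.
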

\begin{proof}
Let $p_{1},p_{2},\dots,p_{k}$ be the points of the shortest self-approaching $s$-$t$ path $\pi^{*}$ in the order from $s$ to $t$, where it touches the boundary of the polygon $P$. Consider the last segment $\pi^{*}(p_{k},t)$. It is a straight-line segment. Otherwise it could be shortened in the following way. Consider the last segment $\seg{qt}$ of a geodesic path from $s$ to $t$, and extend it in the direction from $t$ to $q$ until intersecting path $\pi^{*}$; denote the intersection point as $q'$. Then, $\pi^{*}$ can be shortened by replacing $\pi^{*}(q',t)$ by the segment $\seg{q't}$.

Now, suppose that all the segments $\pi^{*}(p_{i},p_{i+1})$ consist of straight-line segments, circular arcs, or involutes of a circle of some order for all $i>\ell$ for some $\ell$. We will show, that then, the segment $\pi^{*}(p_{\ell-1},p_{\ell})$ consists of straight-line segments, circular arcs, and/or involutes.

\begin{figure}[t]
\centering
\includegraphics{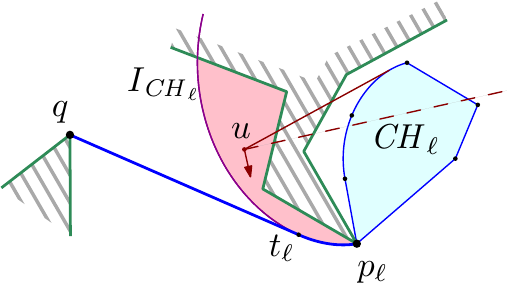}
\caption{Illustration to Theorem~\ref{thm:shortest-path}.}
\label{fig:shortest-path-proof}%
\end{figure}

Denote $\ch\!_{\ell}=\ch(\pi^{*}(p_{\ell},t))$. Let, without loss of generality, $\pi^{*}$ touch the boundary of the polygon at point $p_{\ell}$ on its left side (refer to Figure~\ref{fig:shortest-path-proof}). Then construct an involute $I\!_{\ch\!_{\ell}}$ of the convex hull $\ch\!_{\ell}$ starting at point $p_{\ell}$ with the tangent point moving in the clockwise direction around $\ch\!_{\ell}$ until the first intersection point of the involute with the boundary of the polygon $P$. The area $D_{\ell}$ on the concave side of the involute that it cuts off of the polygon $P$ is a ``dead'' region for any self-approaching path that ends with the subpath $\pi^{*}(p_{\ell,t})$ (red area in the Figure~\ref{fig:shortest-path-proof}). In other words, for any point $u\in D_{\ell}$, any path connecting $u$ to $p_{\ell}$ will have a normal that intersects $\ch\!_{\ell}$, and therefore the subpath $\pi^{*}(p_{\ell},t)$. To show that, consider any piecewise-smooth path $\pi_{u}$ from $u$ to $p_\ell$. Parameterize $\pi_{u}$ for some parameter $\tau\in[0,1]$, where $\pi_{u}(0)=u$ and $\pi_{u}(1)=p_{\ell}$. Consider the distance function $d_{u}(\tau)$ from a point moving along $\pi_{u}$ to the involute $I\!_{\ch\!_{\ell}}$. This function will be piecewise smooth as both of the paths are piecewise-smooth. As a point, moving along $\pi_{u}$, has to eventually coincide with $p_{\ell}$, there exists parameter $\tau'$ at which the distance function is decreasing, and therefore, the angle between a tangent vector to $\pi_{u}$ at the point $u'=\pi_{u}(\tau')$ and a tangent from $u'$ to the convex hull $\ch\!_{\ell}$ is greater than $90^{\circ}$. Therefore, a positive half-plane of the normal to $\pi_{u}$ at point $u'$ does not fully contain the convex hull $\ch\!_{\ell}$, and therefore, the path $\pi_{u}\oplus\pi^{*}(p_{\ell},t)$ is not self-approaching.

Now, consider a geodesic path from $s$ to $p_{\ell}$ in the region $P\backslash D_{\ell}$, and consider its last segment $qp_{\ell}$, where $q$ is the last point before $p_{\ell}$ that belongs to the boundary of $P$. This segment can be a straight-line segment, or a straight-line segment $\seg{qt_{\ell}}$ followed by a piece of the involute $I\!_{\ch\!_{\ell}}$, where $\seg{qt_{\ell}}$ is tangent to $I\!_{\ch\!_{\ell}}$. If segment $qp_{\ell}$ is not on $\pi^{*}$, then, by a similar argument as above, we can show that $\pi^{*}$ can be shortened. Extend the segment $\seg{qt_{\ell}}$ beyond the point $q$ until the intersection $q'$ with $\pi^{*}$. Then, $\pi^{*}$ can be shortened if the subpath $\pi^{*}(q'p_{\ell})$ is replaced by the segment $qp_{\ell}$ of the geodesic.

The boundary of the convex hull $\ch\!_{\ell}$ consists of straight-line segments and pieces of the subpath $\pi^{*}(p_{\ell},t)$, which we assumed were straight segments, arcs, and circle involutes. Therefore, the segment $qp_{\ell}$ of the geodesic path also consists of straight segments, circular arcs, and circle involutes, possibly, of one order higher than the following subpath. Therefore, the shortest self-approaching path consists of straight-line segments, circular arcs, and circle involutes of some order, that is not higher than the number of bends on the path.
\end{proof}

In the last proof, the point $q$ of the last segment $qp_{\ell}$ of the geodesic path from $s$ to $p_{\ell}$ in $P\backslash D_{\ell}$ does not necessarily belong to the geodesic path from $s$ to $t$. Consider an example in Figure~\ref{fig:shortest-path-geodesic}. In it, several vertices of the geodesic path $\gamma$ are in the dead region (on the concave side of the involute). The tangent line from the last vertex ($p_{i}$ in the left example, and $p_{j}$ in the right example) of $\gamma$ before $p_{\ell}$ that is not in the dead region intersects the boundary of the polygon. Angle $\angle p_{i}gt_{\ell}$, where $g$ is the intersection point of $\gamma$ with the involute, is an obtuse angle. This follows from the fact that the intersection angle between the straight-line segment $\seg{p_{i}p_{\ell}}$ and the tangent to the involute at the intersection point must not be greater than $90^{\circ}$, otherwise the point $p_{\ell}$ would not lie in the positive half-plane of the normal to the involute at the intersection point. Then, the total turn angle of the self-approaching path $\pi^{*}$ from $p_{i}$ to $t_{\ell}$ is less than $90^{\circ}$, and thus, the subpath $\pi^{*}(p_{i},t_{\ell})$ consists of straight-line segments. Let the previous inflection point of $\gamma$ before $p_{\ell}$ be $p_{j}$, and the next inflection point of $\gamma$ on or after $p_{\ell}$ be $p_{k}$. It follows then that the subpath $\pi^{*}(p_{j},p_{k})$ is \emph{geodesically convex}, that is, the shortest path between any two points on $\pi^{*}(p_{j},p_{k})$ lies completely on one (and the same side) of the path. We obtain the following lemma.

\begin{figure}[t]
\centering
\includegraphics{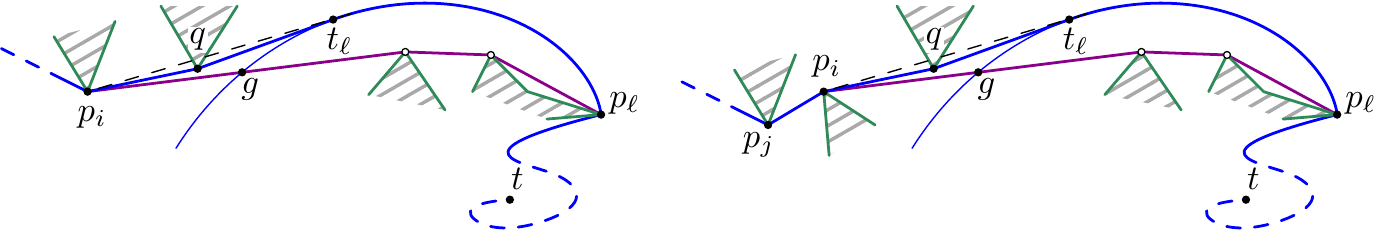}
\caption{A subpath of a shortest self-approaching $s$-$t$ path $\pi^{*}$ (in blue) between two consecutive inflection points of the geodesic path $\gamma$ (in purple) from $s$ to $t$ is geodesically convex. The last bend $q$ of $\pi^{*}$ before the vertex $p_{\ell}$ does not necessarily belong to $\gamma$.}
\label{fig:shortest-path-geodesic}%
\end{figure}

\begin{lem}\label{lem:geodesically-convex}
A shortest self-approaching $s$-$t$ path in a simple polygon $P$ consists of geodesically convex paths between inflection points of the geodesic from $s$ to $t$.
\end{lem}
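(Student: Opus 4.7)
The plan is to combine Lemma~\ref{lem:geodesic-inflection-points} with the structural observations already made in the proof of Theorem~\ref{thm:shortest-path} and in the paragraph immediately preceding this lemma. First, Lemma~\ref{lem:geodesic-inflection-points} guarantees that $\pi^{*}$ passes through every inflection point of the geodesic $\gamma$ from $s$ to $t$. Labelling these points $s=q_{0},q_{1},\dots,q_{m},q_{m+1}=t$ in the order in which they appear along $\gamma$, we get a decomposition $\pi^{*}=\pi^{*}(q_{0},q_{1})\oplus\dots\oplus\pi^{*}(q_{m},q_{m+1})$, and it suffices to show that every piece $\pi^{*}(q_{i},q_{i+1})$ is geodesically convex.

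Fix two consecutive inflection points $q_{i}$ and $q_{i+1}$, and, without loss of generality, assume $\gamma(q_{i},q_{i+1})$ turns only to the left; every reflex vertex of $P$ touched by $\gamma$ inside this range then lies on a single, fixed side. The goal is to show that $\pi^{*}(q_{i},q_{i+1})$ also turns only to the left in the sense that every smooth piece bends towards the same side and every bend of $\pi^{*}$ sits on a reflex vertex of $P$ on that same side. Since a piecewise-smooth path turning only in one direction is geodesically convex (the geodesic in $P$ between any two of its points lies on the concave side), this implication gives the lemma.

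To establish the uniform turning direction, I would walk backwards along $\pi^{*}$ in the spirit of Theorem~\ref{thm:shortest-path}. For any bend $p_{\ell}$ of $\pi^{*}$ lying in $\pi^{*}(q_{i},q_{i+1})$, construct the involute $I\!_{\ch\!_{\ell}}$ of $\ch(\pi^{*}(p_{\ell},t))$ and its dead region $D_{\ell}$. The discussion preceding the lemma shows that any piece of $\gamma$ entering $D_{\ell}$ is bypassed by $\pi^{*}$ via a straight-line chord whose total turn is smaller than $90^{\circ}$, and that $\pi^{*}$ can reconnect to $\gamma$ only at an inflection point or at the tangency point $t_{\ell}$ on the involute. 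Consequently, a right-turning bend of $\pi^{*}$ in this range would have to sit on a reflex vertex of $P$ lying on the side of $\gamma(q_{i},q_{i+1})$ opposite to the turning direction of $\gamma$; but the dead-region argument forces $\pi^{*}$ to shortcut past any such vertex rather than bend on it, by Lemma~\ref{lem:bends} combined with the half-plane property (Lemma~\ref{lem:half-plane}).

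The main obstacle is the last step: rigorously verifying that every candidate reflex vertex of $P$ on the ``wrong'' side of $\gamma(q_{i},q_{i+1})$ is in fact contained in the dead region $D_{\ell}$ of some later bend $p_{\ell}$ of $\pi^{*}$, so that it cannot serve as a bend of $\pi^{*}$. This boils down to a careful but routine reapplication of the involute/dead-region construction of Theorem~\ref{thm:shortest-path} inside the sub-polygon bounded by $\gamma(q_{i},q_{i+1})$ and the corresponding portion of $\partial P$, together with the normal property to exclude any self-approaching detour that would bend against the common turning direction of $\gamma(q_{i},q_{i+1})$.
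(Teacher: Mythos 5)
Your decomposition at the inflection points of $\gamma$ (via Lemma~\ref{lem:geodesic-inflection-points}) and the backwards walk with involutes and dead regions are the same ingredients the paper uses, but the step you postpone as ``careful but routine'' is precisely the crux, and the claim you would need there is false. You want every bend of $\pi^{*}$ between two consecutive inflection points $q_{i},q_{i+1}$ of $\gamma$ to lie on a reflex vertex on the side towards which $\gamma(q_{i},q_{i+1})$ turns, arguing that any vertex on the opposite side is swallowed by a dead region $D_{\ell}$ and hence shortcut past. The shortcut, however, is a tangent segment to the involute $I\!_{\ch}$ of the convex hull of the later path, and that tangent segment can be blocked by the opposite chain of $\partial P$. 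This is exactly the situation of Figure~\ref{fig:shortest-path-geodesic} (and of the case $q_{r}\neq p_{j}$ in the algorithm of Section~\ref{sec:sa-path}): the tangent from the last vertex $p_{i}$ of $\gamma$ outside the dead region crosses the boundary, and $\pi^{*}$ then does bend at reflex vertices of the opposite chain strictly between two consecutive inflection points of $\gamma$. So a piece of $\pi^{*}$ need not turn in a single direction; moreover, your reduction ``uniform turning direction $\Rightarrow$ geodesically convex'' would itself need care for pieces that wind around the convex hull by more than $2\pi$, which the paper explicitly allows.

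The paper closes the blocked-tangent case differently. It observes that the angle $\angle p_{i}gt_{\ell}$, where $g$ is the point where $\gamma$ crosses the involute and $t_{\ell}$ is the tangency point, is obtuse --- otherwise $p_{\ell}$ would not lie in the positive half-plane of the normal to the involute at the crossing point, contradicting the half-plane property. Hence the total turn of $\pi^{*}$ between $p_{i}$ and $t_{\ell}$ is less than $90^{\circ}$, and this connecting subpath consists of straight-line segments only. Geodesic convexity of the piece between the inflection points is then obtained in the sense the paper defines it --- every geodesic between two points of the piece lies on one and the same side of it (the side of the dead regions, since $\pi^{*}$ is locally geodesic outside them) --- rather than via a uniform turning direction. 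To repair your proof you must replace the ``no wrong-side bends'' claim by this bounded-turn analysis of the blocked-tangent configuration.
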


\begin{theorem}
%\begin{restatable}{theorem}{bounds}
\label{thm:size}
A shortest self-approaching $s$-$t$ path in a simple polygon $P$ with $n$ vertices consists of $O(n^{2})$ segments. There exists a simple polygon $P$ and two points $s$ and $t$ in it, such that the shortest self-approaching from $s$ to $t$ has $\Omega(n^{2})$ segments.
%\end{restatable}
\end{theorem}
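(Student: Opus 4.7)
My plan is to prove the upper bound by combining two facts already established in the paper: every bend of a shortest self-approaching path lies at a polygon vertex (so there are $O(n)$ bends), and between two consecutive bends the path is an involute whose complexity is controlled by the complexity of the convex hull of the remaining subpath, which is also $O(n)$. First I would invoke Lemma~\ref{lem:bends} to list the bends $p_1, \dots, p_k$ of $\pi^*$, with $k\leq n$. Then, following the inductive construction in the proof of Theorem~\ref{thm:shortest-path}, the subpath $\pi^*(p_{\ell-1}, p_\ell)$ is either a straight segment or a straight segment followed by a portion of an involute of $\ch\!_\ell=\ch(\pi^*(p_\ell, t))$.

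The key claim I would establish is that $\ch\!_\ell$ has complexity $O(n)$: its vertices lie among the bends of $\pi^*(p_\ell, t)$ (which are polygon vertices by Lemma~\ref{lem:bends}) together with $t$, and its boundary edges are either bitangent line segments or smooth subarcs of pieces of $\pi^*(p_\ell, t)$. Since the number of vertices is at most $n+1$, so is the number of edges. The involute of a convex curve has one piece per feature of the boundary---a circular arc around each vertex and an ``involute of that smooth piece'' for each smooth edge, while bitangent segments contribute only degenerate (pointwise) phases---so the subpath $\pi^*(p_{\ell-1}, p_\ell)$ consists of $O(n)$ pieces. Summing over the at most $n$ bends yields the $O(n^2)$ upper bound.

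For the matching lower bound I plan to construct a ``nested spiral'' polygon with $\Theta(n)$ reflex vertices, arranged so that the geodesic from $s$ to $t$ has $\Theta(n)$ inflection points (forcing $\pi^*$ to bend at each of them by Lemma~\ref{lem:geodesic-inflection-points}) and so that after the $i$th bend the convex hull $\ch\!_i$ still carries $\Theta(n-i)$ features on its boundary that the string must wrap around. Then each of the $\Theta(n)$ inter-bend involutes sweeps past $\Theta(n)$ features of its convex hull, producing $\Theta(n)$ pieces per bend and $\Theta(n^2)$ pieces overall.

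The hard part will be making the lower bound construction rigorous. One must place the spiral's reflex vertices so that (a) at each bend the involute genuinely wraps around $\Theta(n)$ convex-hull features without degenerating into fewer pieces, and (b) no self-approaching path exists that skips some of the intended bends by cutting through one of the polygon's pockets. Verifying (a) and (b) will likely require an inductive argument along the spiral, checking that every tangency configuration produces a nondegenerate involute piece and that the polygon's corridors are narrow enough to forbid shortcuts yet wide enough to contain the spiraling involute of ever-increasing order.
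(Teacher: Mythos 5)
Your upper bound hinges on the claim that $\ch\!_{\ell}=\ch(\pi^{*}(p_{\ell},t))$ has complexity $O(n)$, justified by counting its corners (which you correctly place at bends of $\pi^{*}$, hence at polygon vertices) and then asserting that the number of edges is at most the number of vertices. That inference fails here: the boundary pieces of $\ch\!_{\ell}$ that govern the complexity of the next involute are the analytic pieces (straight segments, circular arcs, involutes of various orders), and these are \emph{not} delimited by corners. By Lemma~\ref{lem:bends} the path has no bends interior to $P$, so consecutive pieces of $\pi^{*}$ join tangentially, and a single corner-free stretch of the hull boundary can consist of arbitrarily many analytic pieces. Since the suffix $\pi^{*}(p_{\ell},t)$ can itself carry $\Theta(n^{2})$ pieces (that is the very content of the theorem), corner counting gives no control over the number of hull pieces, and the naive recursion (pieces of the new hull $\le$ pieces of the new inter-bend subpath $+$ pieces of the old hull $+$ bridges) could grow far faster than linearly in the number of bends. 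The paper closes exactly this hole with an amortization via evolutes: each involute piece of the inter-bend subpath $\sigma$ that sweeps the full parameter range of its evolute piece buries that evolute strictly inside the new convex hull, so it can never generate pieces of future involutes; consequently the hull complexity satisfies $T(\ell-1)\le T(\ell)+4$ (one new involute piece, one new straight piece of $\sigma$, at most two new bridging tangents), and with $T(0)=2$ the hull has only $O(k)=O(n)$ pieces after $k$ bends. Some argument of this kind is indispensable; without it your induction does not go through.

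A secondary gap: you assert that the involute contributes ``one piece per feature of the boundary,'' but between two consecutive bends the unwinding string may wrap around $\ch\!_{\ell}$ several times, revisiting the same features, so the correct bound is (hull size) times (winding number); the paper then charges each extra revolution to the constant number of polygon vertices needed to force the spiral, keeping the total at $O(n^{2})$. Your lower-bound plan---a nested spiral with $\Theta(n)$ forced bends, each followed by an involute sweeping $\Theta(n)$ hull features---is the same construction as the paper's (which forces each bend using only four polygon vertices) and is sketched at a comparable level of detail, so the genuine defect of the proposal lies in the upper-bound argument described above.
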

\begin{proof}
To prove the upper bound, we will show that the number of segments on the convex hull of a shortest self-approaching path with $k$ bends is $O(k)$. Consider a bend $p_{\ell}$ of $\pi^{*}$, and let the convex hull of subpath $\pi^{*}(p_{\ell},t)$ have $T(\ell)$ segments. Let the previous bend before $p_{\ell}$ on $\pi^{*}$ be $p_{j}$, and denote the subpath of $\pi^{*}$ from $p_{j}$ to $p_{\ell}$ as $\sigma$. The subpath $\sigma$ consists of a straight segment, possibly, followed by several involute segments. Consider the evolute\footnote{If curve $\iota$ is an involute of curve $\sigma$, then $\sigma$ is called the evolute of $\iota$.} curves of these involutes. Except for maybe the last segment of the chain, each involute segment covers the complete range of the values of parameter $\theta$ of its evolute segment. Thus, this evolute lies completely inside the convex hull of the subpath $\pi^{*}(p_{j},t)$, and will no longer define other involutes. Notice, that this evolute can belong to the subpath $\sigma$ itself, if it winds around the convex hull of $\pi^{*}(p_{\ell}, t)$ multiple times. This fact does not change the size of the convex hull. Therefore, the convex hull of $\pi^{*}(p_{j})$ consists of no more than $T(\ell)+4$ segments: the number of involute segments grows by at most $1$, plus one new straight-line segment on $\sigma$, plus at most two straight-line segments of the common tangents to the convex hull of $\pi^{*}(p_{\ell},t)$ and the convex hull of $\sigma$. Setting the boundary condition $T(0)=2$, we conclude, that the number of segments on the convex hull of a shortest self-approaching path with $k$ bends is $O(k)$. The number of segments on a subpath $\sigma$ of $\pi^{*}$ between two bends is bounded by the size of the convex hull of the following path times the winding number of $\sigma$. And because $\pi^{*}$ has at most $O(n)$ bends, and it takes a constant number of vertices of $P$ to form a spiral for the path $\pi^{*}$ to wind around itself, its total size is $O(n^{2})$.

\begin{figure}[t]
\centering
\includegraphics{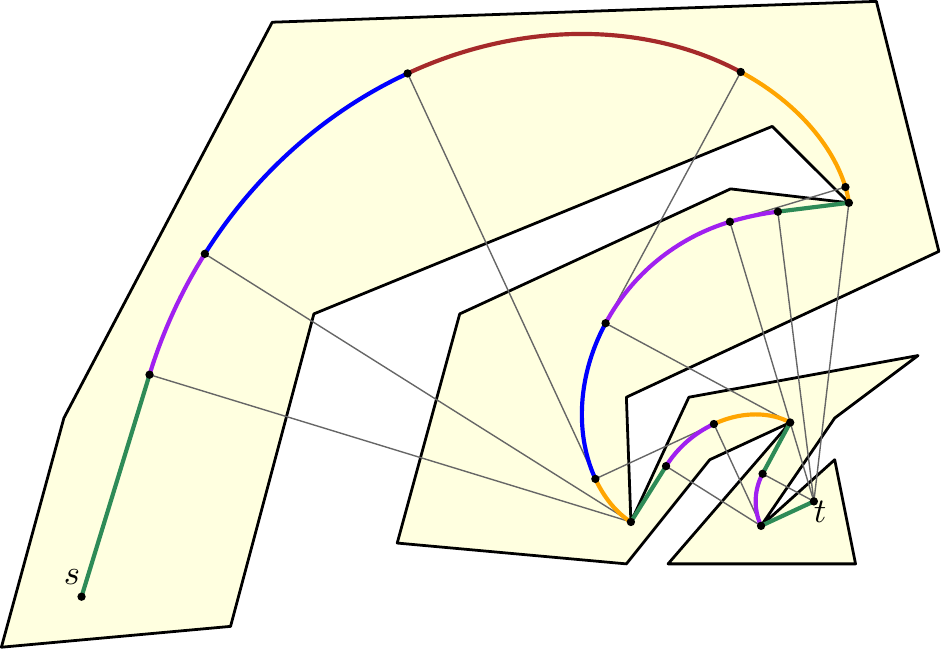}
\caption{Shortest self-approaching path from $s$ to $t$ consists of straight-line segments, circular arcs, and involutes of a circle of some order. Straight segments are shown in green, circular arcs in purple, involutes of a circle of first order in orange, involutes of a circle of a second order in blue, and involutes of a circle of third order in brown.}
\label{fig:shortest-path}%
\end{figure}

Consider a construction of a shortest self-approaching path in Figure~\ref{fig:shortest-path}. It shows how to linearly grow the number of segments on each subpath between the consecutive pairs of bends. Moreover, it is possible to force each bend with only a constant (specifically, four) number of the polygon boundary vertices per bend. Thus, for any $n$ there exists a polygon of size $n$ and two points $s$ and $t$ in it, such that the shortest self-approaching path between the points has $\Omega(n^{2})$ segments.
\end{proof}

\section{Existence of a self-approaching path}\label{sec:sa-path}

In this section we consider the question of testing whether, for given points $s$ and $t$ in a polygon $P$, they can be connected with a self-approaching path. In Theorem~\ref{thm:shortest-path} we proved that a shortest self approaching path can consist of involutes of a circle of a high order, and in Section~\ref{sec:prelim} we showed that such an involute is defined by a system of transcendental equations. In~\cite{Laczkovich2003} Laczkovich proved a strengthening of Richardson's theorem, which states that in general the statement $\exists x : f(x)=0$ is undecidable, where $f(x)$ is an expression generated by the rational numbers, the variable $x$, the operations of addition, multiplication, and composition, and the sine function. The Equations~(\ref{eq:involute}) that we need to solve to obtain formulas for the involutes are a special case of the class of expressions in Laczkovich's theorem. Nevertheless, it strongly suggests that an involute of a circle of order higher than one cannot be computed.

Next, we show an algorithm to test whether there exists a self-approaching path connecting two points $s$ and $t$, and if so, to compute the shortest path, under the assumption that we can solve Equations~(\ref{eq:involute}). %And after that, we release this assumption, and show how to modify the algorithm to be able to answer the decision question in the case when there exists a self-approaching path from $s$ to $t$, and there is a certain buffer zone around it empty of the polygon's boundary points.
Subsequently, it is possible to release this assumption, and modify the algorithm to build an approximate solution, given that the shortest self-approaching path from $s$-to-$t$ exists and there is a corridor of size $\varepsilon$ around it free of the polygon boundary points.

\subsection{Shortest path algorithm}

The proof of Theorem~\ref{thm:shortest-path} is constructive. Let us assume that we can solve equations of the form as Equations~(\ref{eq:involute}) for an involute of order $k$ in time $O(f(k))$, and evaluate the formula of the involute of order $k$ for a given parameter $\theta$ in time $O(g(k))$. Then, we can decide if two points $s$ and $t$ can be connected by a self-approaching path, and we can construct the shortest path between the points. The outline of the algorithm:
\begin{itemize}
\item Starting at $t$, move backwards along a geodesic $s$-$t$ path $\gamma$. Maintain the convex hull $\ch$ of the final part of the shortest self-approaching path $\pi^{*}$ to the destination $t$ built so far.
\item At every bend point $p_{\ell}$:
\begin{itemize}
\item Calculate the appropriate branch of an involute $I\!_{\ch}$ of the convex hull $\ch$. If $I\!_{\ch}$ intersects the opposite boundary of the polygon, thus, cutting off $s$ from $t$, report that a self-approaching path from $s$ to $t$ does not exist and terminate the algorithm.
\item Otherwise, find a geodesic path $\gamma_{\ell}$ from the preceding inflection point of $\gamma$ to $p_{\ell}$ in $P\backslash I\!_{\ch}$, and add its last segment $qp_{\ell}$ as a prefix to $\pi^{*}$.
\item Update the convex hull $\ch$. Repeat for the new bend point $q$, until $s$ is reached. Report the found path $\pi^{*}$.
\end{itemize}
\end{itemize}
To obtain an algorithm with an optimal running time, there are a few considerations to take into account when constructing the shortest path. First, instead of unnecessarily calculating the whole involute $I\!_{\ch}$ until the intersection point with the boundary of $P$, and then discarding the part of it under the tangent line from $q$, its segments can be calculated one by one as needed until the tangent point. Second, to optimally test if $I\!_{\ch}$ intersects the opposite boundary of the polygon, we can maintain a shortest path tree that will allow us to build funnels from the opposite sides of the polygon boundary. Third, it is not necessary to construct the whole geodesic $\gamma_{\ell}$ to be able to compute its last segment $qp_{\ell}$. Instead, we can move backwards along $\gamma$, vertex by vertex, until we reach a point from which a tangent to $I\!_{\ch}$ can be computed (possibly with adding new points along it).

Let the edges of $P$ be oriented in the counter-clockwise order. We shall call the two ends of an edge $e$, the \emph{front}-point, and the \emph{end}-point.

Next, we present the details of the algorithm.

\subparagraph{Initialization step.}
Compute a shortest path tree $\mathit{SPT}\!_{s}$ with the root $s$~\cite{Guibas1987}, and preprocess it to answer the lowest common ancestor query~\cite{Bender2000}. Compute the geodesic $\gamma$ from $s$ to $t$, and store $\gamma$ as a stack of vertices.
%This can be done in $O(n)$ time after triangulating $P$~\cite{Guibas1987}. 
Let the first and the last segments of $\gamma$ be $\seg{sp'}$ and $\seg{p''t}$ respectively. Extend $\seg{sp'}$ beyond $s$ until intersection with $\partial P$ at some point $a$, and extend $\seg{p''t}$ beyond $t$ until intersection with $\partial P$ at some point $b$. This can be done in $O(\log n)$ time with a ray-shooting query after linear-time preprocessing of the polygon~\cite{Chazelle1994}. Let $L$ be the chain of the boundary of $P$ from $b$ to $a$ in the counter-clockwise order, we shall call it the \emph{left} chain. Similarly, let the \emph{right} chain $R$ be the chain of the boundary of $P$ from $a$ to $b$ in the counter-clockwise order.
%Denote $e_{l}$ and $e_{r}$ to be the pointers to two boundary edges of $P$, one on the left side, and one on the right side. To initialize the pointers, extend $\seg{pt}$ beyond the point $t$ until intersecting some edge $e$ of $P$, and set $e_{l}$ and $e_{r}$ to be $e$.
Initialize $\pi^{*}$ and $\ch$ with the last segment $\seg{p''t}$ of $\gamma$, and pop the point $t$ from $\gamma$.

\subparagraph{The main loop.}
Let, before the beginning of the current iteration of the loop, $p_{\ell}$ be the point on top of the stack $\gamma$. Let $\pi^{*}$ touch $\partial P$ in point $p_{\ell}$ on its left side (the case when $\pi^{*}$ touches $\partial P$ on its right side is equivalent). Let $\ch$ be the convex hull of the already built subpath $\pi^{*}(p_{\ell},t)$.
%Let $h_{\ell}$ be the normal to $\pi^{*}(p_{\ell},t)$ at point $p_{\ell}$, and let $e_{r}$ be the closest edge to $p_{\ell}$ on the right chain of the polygon boundary that intersects $h_{\ell}$. Finally, let $e_{l}$ be the boundary edge outgoing from $p_{\ell}$.

Pop $p_{\ell}$ from the top of the stack $\gamma$. Consider the previous segment $\seg{p_{i}p_{\ell}}$ of $\gamma$ (point $p_{i}$ is currently on top of the stack $\gamma$).

\case{1} If the angle between $\seg{p_{\ell}p_{i}}$ and the tangent in the clockwise direction to $\ch$ at point $p_{\ell}$ form an angle that is not less than $90^{\circ}$, then $\seg{p_{i}p_{\ell}}$ lies on the shortest self-approaching path from $s$ to $t$; append $\pi^{*}(p_{\ell},t)$ with $\seg{p_{i}p_{\ell}}$ in the front, and update the convex hull. %Updating the convex hull can be done in $O(\log n + f(k))$ time, where $k$ is the maximum order of the two involute segments of $\ch$ which the two tangents from the point $p_{i}$ touch.

\case{2} If the angle between $\seg{p_{\ell}p_{i}}$ and the tangent in the clockwise direction to $\ch$ form an angle that is less than $90^{\circ}$, we need to calculate the involute $I_{\ch}$ of the convex hull for the tangent point moving clockwise around the boundary of $\ch$ starting at $p_{\ell}$. We first will determine until which point to calculate $I_{\ch}$.

\begin{figure}[t]
\centering
\includegraphics[page=1]{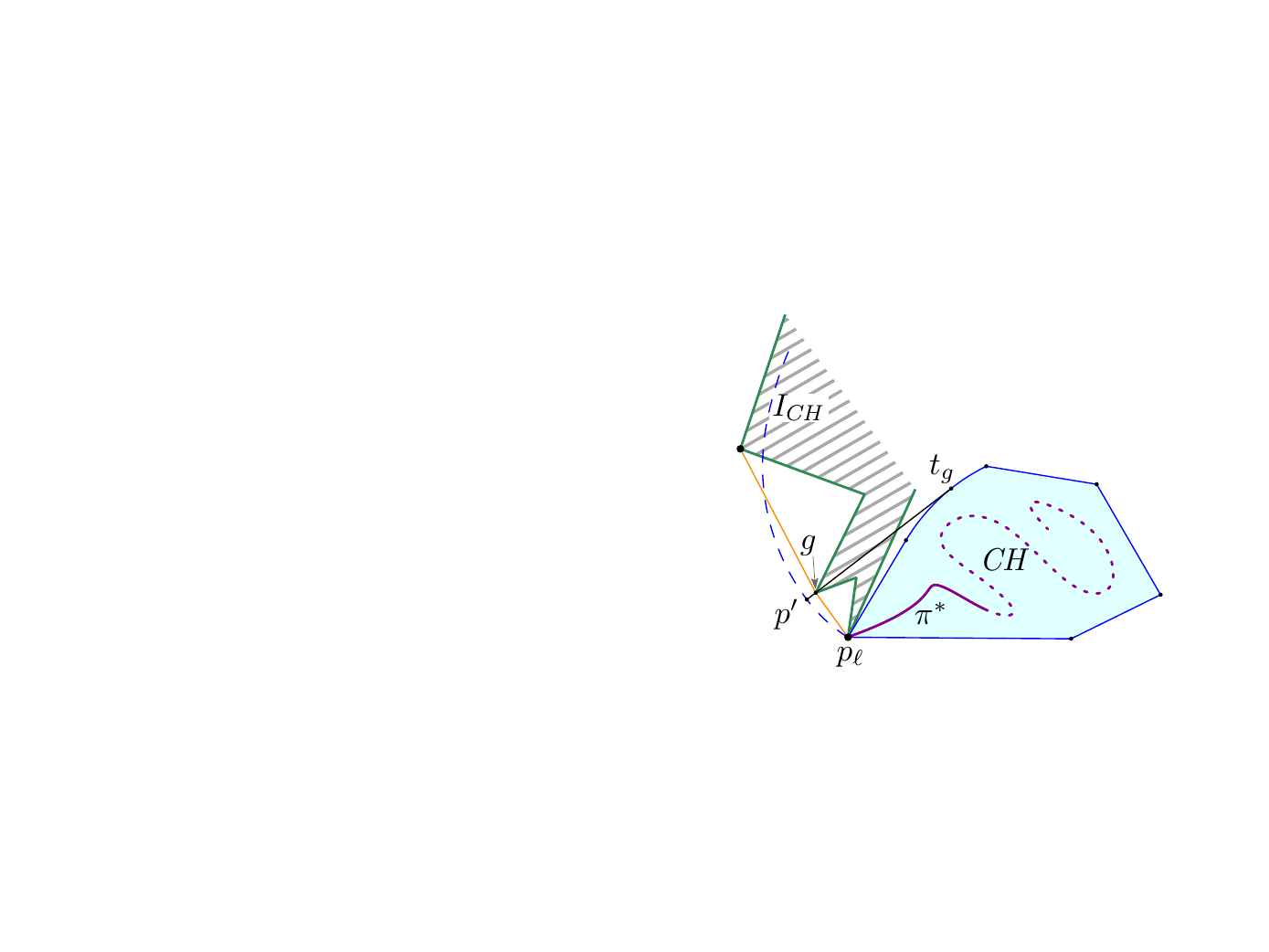}
\hfil
\includegraphics[page=3]{shortest-path-algo}
\caption{Illustration for Case 2 of the algorithm.}
\label{fig:algo1}%
\end{figure}

First, we check whether the points on the geodesic path $\gamma$ before $p_{\ell}$ lie in the dead region defined by $I_{\ch}$. To do that without explicitly constructing $I_{\ch}$ first, for each point $g$ on top of the stack $\gamma$, we construct a tangent line to  $\ch$ which is leaving it on its right side (refer to Figure~\ref{fig:algo1} (left)). Let $t_{g}$ be the tangent point on $\ch$, and let $t_{g}\in I_{g}$ for some involute segment $I_{g}$ on the boundary of the convex hull. Let $t_{g}g$ intersect $I_{\ch}$ at point $p'$. We know that the length of the segment $\seg{p't_{g}}$ is equal to the length of the boundary of the convex hull $\ch$ from $t_{g}$ to $p_{\ell}$. Thus, to check whether $g$ lies in the dead region we can compare the length of the segment $\seg{gt_{g}}$ with the length of the boundary of $\ch$ from $t_{g}$ to $g$. If $g$ does lie in the dead region, we simply remove it from the top of the stack $\gamma$, and proceed. If at some moment $\gamma$ becomes empty, \ie, the point $s$ lies in the dead region, we report that a self-approaching path from $s$ to $t$ does not exist and terminate the algorithm.

Now, let $p_{i}$ be the first point on $\gamma$ before $p_{\ell}$ that does not lie in the dead region of $I_{\ch}$. As in Figure~\ref{fig:shortest-path-geodesic}, the tangent segment from $p_{i}$ to the involute may intersect the right chain of the boundary of $P$. Moreover, the right chain of the boundary of $P$ may intersect $I_{\ch}$. To test and account for that case, we do the following. Let $\vec{\tau}=\vec{I}'_{\ch}(p_{\ell})$ be the tangent vector to $I_{\ch}$ at point $p_{\ell}$. Run a ray shooting query from $p_{\ell}$ in the direction $-\vec{\tau}$. Let it intersect an edge $e'$ of $R$, and denote its front-point as $p_{r}$ (refer to Figure~\ref{fig:algo1} (right)). Then, find a vertex $p_{j}$ in the shortest path tree $\mathit{SPT}\!_{s}$ that is the lowest common ancestor of $p_{\ell}$ and $p_{r}$. Let $\gamma_{\ell}$ and $\gamma_{r}$ be the two shortest paths from $p_{j}$ to $p_{\ell}$ and to $p_{r}$ respectively. Paths $\gamma_{\ell}$ and $\gamma_{r}$ form two convex chains. If $\gamma_{r}$ does not intersect $I_{\ch}$, then either a common tangent to $\gamma_{\ell}$ and $I_{\ch}$, or a common tangent to $\gamma_{r}$ and $I_{\ch}$, will belong to $\pi^{*}$. To be able to compute the common tangents, we now explicitly construct $I_{\ch}$ segment by segment until a certain point. Let $p'p''$ be the last segment of $I_{\ch}$ constructed so far (with the curve orientation from $p'$ to $p''$). We stop the construction of $I_{\ch}$ when the segment $\seg{p'p_{j}}$ makes a left turn with respect to the tangent vector $-\vec\tau$, where $\vec\tau=\vec{I}'_{\ch}(p')$.

Whether $\gamma_{r}$ intersects $I_{\ch}$ can be found during the computation of the common tangent. If it does, report that $s$ and $t$ cannot be connected with a self-approaching path and terminate the algorithm.

Let $q_{\ell}$ and $q_{r}$ be the two tangent points on $\gamma_{\ell}$ and $\gamma_{r}$ respectively of the common tangent lines with $I_{\ch}$. One of the points $q_{\ell}$ and $q_{r}$, or both, will be equal to $p_{j}$.
\begin{itemize}
\item If $p_{j}=q_{\ell}=q_{r}$, then append $\pi^{*}$ with $\seg{p_{j}t_{j}}\oplus I_{\ch}(t_{j},p_{\ell})$, where $t_{j}$ is the tangent point on $I_{\ch}$.
\item If $p_{j}=q_{r}\not=q_{\ell}$, then append $\pi^{*}$ with $\gamma(p_{j},q_{\ell})\oplus \seg{q_{\ell}t_{\ell}}\oplus I_{\ch}(t_{\ell},p_{\ell})$, where $t_{\ell}$ is the tangent point on $I_{\ch}$ of the common tangent with $\gamma_{\ell}$.
\item If $p_{j}=q_{\ell}\not=q_{r}$, then append $\pi^{*}$ with $\gamma(p_{j},q_{r})\oplus \seg{q_{r}t_{r}}\oplus I_{\ch}(t_{r},p_{\ell})$, where $t_{r}$ is the tangent point on $I_{\ch}$ of the common tangent with $\gamma_{r}$. 
\end{itemize}
Remove the points from $\gamma$ until $p_{j}$ is on top of the stack, and update $\ch$. Iterate over the main loop until $\gamma$ is empty, and return $\pi^{*}$.

\subparagraph{Computing common tangents.} During the execution of the algorithm, we need to be able to compute common tangents between a convex polygonal chain and a convex chain of involutes, and between two convex chains of involutes. It is possible to compute a common tangent of two non-intersecting simple polygons of size $m$ and $n$ in $O(\log (m+n))$ time~\cite{Kirkpatrick1995}. We will use this as the first step in finding a common tangent to our chains.

\begin{figure}[t]
\centering
\includegraphics[page=1]{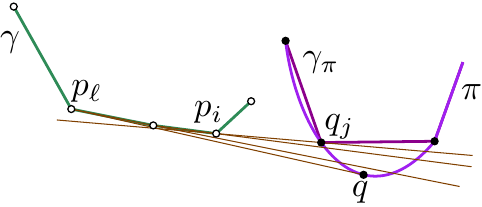}\hfil\includegraphics[page=2]{bitangent}
\caption{Common tangent to a polygonal chain and a chain of involutes.}
\label{fig:bitangent}%
\end{figure}

Consider a convex polygonal chain $\gamma$ of size $m$ and a convex chain of involutes $\pi$ of size $n$, with points ordered in the counter-clockwise order. First, we will show how to compute an outer common tangent. Without loss of generality, we will show how to compute a tangent $pq$, where $p\in\gamma$, $q\in\pi$, and $\gamma$ and $\pi$ lie on the left of $\vec{pq}$.  Denote $\gamma_\pi$ to be a polygonal chain connecting the end points of segments of $\pi$ in the consistent order. Let the convex hulls of $\gamma$ and $\gamma_\pi$ be disjoint. Build a common tangent $\seg{p_{i} q_{j}}$ between $\ch(\gamma)$ and $\ch(\gamma_\pi)$ in $O(\log(m+n))$ time (refer to Figure~\ref{fig:bitangent}), where $p_{i}\in\gamma$ and $q_{i}\in\gamma_{\pi}$. The common tangent between $\gamma$ and $\pi$ does not necessarily go through the point $p_{i}$, but it does touch one of the adjacent segment to the point $q_{j}$. Let, without loss of generality, the common tangent touch the segment $q_{j}q_{j+1}$ of $\pi$. Then, using a binary search on the chain $\gamma$ we can find a point $p_{\ell}$ which will belong to the common tangent. For that point $p_{\ell}$, the ray $p_{\ell}p_{\ell+1}$ intersects $q_{j}q_{j+1}$, and the ray $p_{\ell-1}p_{\ell}$ does not. To test whether a line $\ell$ intersects a given involute, we can compute the tangent line to this involute parallel to $\ell$, and check in which half-plane of $\ell$ the tangent point belongs to. Computing the tangent point is equivalent to evaluating the formula of the involute of $q_{j}q_{j+1}$ for the parameter $\theta$ which corresponds to the direction perpendicular to $\ell$. Thus, the tangent point $q$ can be found in $O(g(k)\log m)$ time, where $k$ is the order of the involute. After that, we check if the ray intersects the involute by checking the turn angle of the segment of $\gamma$ defining the ray, and the segment from its end point to the tangent point $q$. Consider ray $p_{\ell}p_{\ell+1}$. If $p_{\ell}p_{\ell+1}q$ is making a right turn under $90^{\circ}$, the ray $p_{\ell}p_{\ell+1}$ intersects the involute segment $q_{j}p_{j+1}$. The last step is to calculate the tangent segment $\seg{pq}$ from $p$ to $q_{j}q_{j+1}$ in $O(f(k))$ time. Therefore, a common tangent to a polygonal chain and a chain of involutes can be constructed in $O(\log(m+n)+g(k)\log m + f(k))$ time.

To compute an inner common tangent of a convex polygonal chain $\gamma$ of size $m$ and a convex chain of involutes $\pi$ of size $n$, we follow the same steps as in the case of the outer common tangent, except for the rays from the segments of $\gamma$ which will be emanating in the opposite direction.

A small modification will allow us to construct a common tangent between two convex chains of involutes $\pi_{1}$ and $\pi_{2}$. Start with computing a common tangent $\tau$ for the two polygonal chains $\gamma_{\pi_{1}}$ and $\gamma_{\pi_{2}}$ connecting the end points of segments of $\pi_{1}$ and $\pi_{2}$ respectively. Let $\tau$ touch $\gamma_{\pi_{1}}$ and $\gamma_{\pi_{2}}$ at points $p_{i}$ and $q_{j}$. If $\tau$ does not intersect any of the four adjacent involute segments to $p_{i}$ and $q_{j}$, then we have found the common tangent segment $\seg{p_{i}q_{j}}$. If $\tau$ intersects two involute segments adjacent to $p_{1}$ and $p_{2}$, then the common tangent line to $\pi_{1}$ and $\pi_{2}$ will touch these involute segments. Let, without loss of generality, $\tau$ intersect two involute segments $p_{i}p_{i+1}$ and $q_{j}q_{j+1}$. The common tangent line to these two segments will define the parameter $\theta$ which, in its turn, will determine the tangent points $p$ and $q$ on the two involute segments. Note, that it follows that the orders of involutes of $p_{i}p_{i+1}$ and $q_{j}q_{j+1}$ must both be even, or must both be odd. The common tangent line can be found by solving one of the following equations (depending on the parity of the order of the involutes):
\[
I_{1}(\theta)=I_{2}(\theta)+c\begin{pmatrix}\cos\theta\\ \sin\theta\end{pmatrix}\qquad\text{or}\qquad I_{1}(\theta)=I_{2}(\theta)+c\begin{pmatrix}-\sin\theta\\ \cos\theta\end{pmatrix}\,.
\]
This system of equations is similar to the one of Equations~\ref{eq:involute}. We find the tangent segment $\seg{pq}$ in $O(f(k))$ time, where $k$ is the maximum order of the two involutes. Finally, if $\tau$ intersects only one involute segment, we repeat similar steps to finding the tangent between the polygonal chain and the chain of involutes. Overall, it takes the same asymptotic running time $O(\log(m+n)+g(k)\log m + f(k))$ to compute the tangent segment.

If the two chains $\gamma$ and $\pi$ may be intersecting, we start with checking if $\gamma$ and $\gamma_{\pi}$ are intersecting in $O(\log(m+n))$ time~\cite{Chazelle1987}. Afterwards, during the binary search, for each point of $\gamma$ considered, we check in $O(f(k))$ time if the point is inside the involute. If no points inside the involute were found, and the binary search returned a candidate point $p_{\ell}$, we check if segment $\seg{p_{\ell-1}p_{\ell}}$ intersects the involute (with both points $p_{\ell-1}$ and $p_{\ell}$ being outside). We do this again by finding the tangent point and checking which half-plane it lies in. Overall, for two chains $\gamma$ and $\pi$, we can test if they intersect, and if not, find the inner tangent segment, in $O(\log(m+n)+(g(k)+f(k))\log m)$ time.

\subparagraph{Maintaining $\ch$.} At the end of each iteration of the main algorithm, we need to update the convex hull of the subpath of the shortest self-approaching path built so far. This can involve finding a tangent from a point to a chain of involutes, or finding a common tangent of two chains of involutes.

Moreover, we want to be able to optimally calculate the length of a boundary from the current point $p_{\ell}$ to some point $t_{g}$. For that, associate two values $\dist_{cw}(u)$ and $\dist_{ccw}(u)$ with each end point of a segment on $\ch$ that will contain the distance to $p_{\ell}$ (up to some constant that will be equal for all the points) along the boundary in the clockwise and counter-clockwise direction respectively. Moreover, for two points $u$ and $v$ on $\ch$, the length of the boundary between them can be calculated by $\dist_{ccw}(v)-\dist_{ccw}(u)$, if the chain of $\ch$ between $u$ and $v$ in the counter-clockwise order does not contain $p_{\ell}$. This fact will allow us to maintain the values in the points unchanged when updating the convex hull.

At every iteration of the algorithm, the distance from some tangent point $t_{g}$ on an involute segment $p'p''$ to $p_{\ell}$ in the clockwise direction can be computed by formula $s(t_{g})=\length_{I}(t_{g},p'')+\dist_{cw}(p'')-\dist_{cw}(p_{\ell})$, where $\length_{I}(t_{g},p'')$ is the arc length of the involute from point $t_{g}$ to $p''$. Analogously, the distance from $t_{g}$ to $p_{\ell}$ in the counter-clockwise direction can be computed by taking $s(t_{g})=\length_{I}(t_{g},p')+\dist_{ccw}(p')-\dist_{ccw}(p_{\ell})$.

When updating the convex hull after extending the path $\pi^{*}$, we calculate the lengths of the tangent segments and the new involute arcs, and set the values $\dist_{cw}(u)$ and $\dist_{ccw}(u)$ to the new points of $\ch$ relatively to the values of the points remaining on $\ch$. This will take $f(k)$ time to compute the arc length per segment of an involute of order $k$.

Taking these considerations into account, we conclude with the following theorem:
\begin{theorem}
%\begin{restatable}{theorem}{mainalgo}
\label{thm:main-algo}
The algorithm above constructs a shortest self-approaching path from $s$ to $t$ or reports that it does not exist in $O(K+\frac{n\log K}{\sqrt{K}}(g(\sqrt{K})+f(\sqrt{K})))$ running time, where $K$ is the size of the output.
%\end{restatable}
\end{theorem}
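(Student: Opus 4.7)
I would split the proof into a correctness argument and a running-time analysis, then identify the amortization step as the main technical obstacle.

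\textbf{Correctness.} The plan is to argue by reverse induction along $\pi^{*}$ that at the start of each iteration of the main loop the algorithm has correctly reconstructed the suffix of the true shortest self-approaching path ending at $t$, together with its convex hull $\ch$. The inductive step splits into the two cases treated by the algorithm. In Case~1 the geodesic edge $\seg{p_ip_\ell}$ makes an angle of at least $90^{\circ}$ with the clockwise tangent to $\ch$ at $p_\ell$; by Lemma~\ref{lem:half-plane} its normal at $p_\ell$ contains all of $\pi^{*}(p_\ell,t)$ in the positive half-plane, so prepending it preserves the self-approaching property, and the optimality portion of Theorem~\ref{thm:shortest-path} shows that this edge is forced. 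In Case~2 the next part of $\pi^{*}$ must unwind around $\ch$; I would invoke the dead-region argument from the proof of Theorem~\ref{thm:shortest-path} to show that the next bend of $\pi^{*}$ is the common tangent of $I_{\ch}$ with whichever of the two funnel chains $\gamma_\ell$, $\gamma_r$ from the lowest common ancestor $p_j$ it first touches, and Lemma~\ref{lem:geodesically-convex} to justify that no inflection point of $\gamma$ strictly between $p_j$ and $p_\ell$ is skipped. Non-existence is then certified precisely when the dead-region test eliminates $s$ or when $R$ first meets $I_{\ch}$.

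\textbf{Running time.} The structural input is Theorem~\ref{thm:size}: if $\pi^{*}$ has $K$ segments, then the number $b$ of bends satisfies $b=O(\sqrt K)$ and every involute segment appearing in $\ch$ has order at most $b=O(\sqrt K)$, so a single involute evaluation costs $g(\sqrt K)$ and a single involute-defining solve costs $f(\sqrt K)$. I would bookkeep in four buckets: (i) initialization of the shortest-path tree, the LCA data structure, ray-shooting preprocessing, and the geodesic extraction, all $O(n)$; (ii) the emission of the $K$ output segments, $O(K)$; (iii) the bend iterations themselves, which perform at each of the $O(\sqrt K)$ bends a constant number of common-tangent and convex-hull-update computations at cost $O(\log K\,(g(\sqrt K)+f(\sqrt K)))$ each by the subroutine analysis; and (iv) the scans of the geodesic stack $\gamma$ used to discard dead-region vertices, whose $O(n)$ pops distribute across the $O(\sqrt K)$ bends so that each bend absorbs $O(n/\sqrt K)$ pops, each pop costing $O(\log K\,(g(\sqrt K)+f(\sqrt K)))$ by the tangent-shooting test. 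Summing the four buckets yields the stated bound $O(K+(n\log K/\sqrt K)(g(\sqrt K)+f(\sqrt K)))$.

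\textbf{Main obstacle.} The hard part will be making the amortization rigorous. The cost of a bend iteration actually scales with the order $k_\ell$ of the involute in $\ch$ at that moment rather than with the final $\sqrt K$, and $k_\ell$ grows monotonically from $0$ up to $O(\sqrt K)$ as the algorithm proceeds. I would argue under monotonicity of $f$ and $g$ that $\sum_\ell (f(k_\ell)+g(k_\ell))=O(\sqrt K\,(f(\sqrt K)+g(\sqrt K)))$, and similarly that the $O(n)$ dead-region tests, when charged to the bend iteration that eliminates them, collectively meet the same bound after the $n/\sqrt K$ amortization. Once this accounting is in place, combining it with Theorem~\ref{thm:size} reduces the entire running-time claim to a straightforward summation.
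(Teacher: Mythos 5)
Your correctness outline matches the paper's in substance (the paper argues directly that every normal to each added piece avoids the convex hull of the suffix, that the output is geodesic in $P\setminus\cup D$, and that both failure reports are certified by dead regions), so that half is fine. The gap is in the running-time analysis, and it is twofold. First, your structural claim that the number of bends is $b=O(\sqrt K)$ is unjustified and false in general: bends are vertices of $P$ (Lemma~\ref{lem:bends}) and their number can be $\Theta(\min(n,K))$ --- e.g.\ a shortest path consisting only of straight segments has one bend per segment, so $b=K$. The quantity that must be bounded by $O(\sqrt K)$ is the \emph{order of the involutes}, and that bound does not follow from counting bends; it comes from the winding structure in the proof of Theorem~\ref{thm:size}: the order increases by at most one per bend and only when the path wraps around the current convex hull, whose size is proportional to the number of preceding bends, so reaching order $k$ forces $\Omega(k^{2})$ output segments. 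Your proposal substitutes a false lemma for this argument, and with $b$ possibly as large as $K$ your bucket (iii) would cost $O(K\log K\,(g+f))$, far above the target.

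Second, the arithmetic in your buckets does not produce the stated bound even granting $O(\sqrt K)$ iterations. In bucket (iv) you have $O(n)$ pops each charged $O(\log K\,(g(\sqrt K)+f(\sqrt K)))$; however you distribute them over bends, this sums to $O(n\log K\,(g+f))$, a factor $\sqrt K$ above $O\bigl(\tfrac{n\log K}{\sqrt K}(g(\sqrt K)+f(\sqrt K))\bigr)$. The paper's accounting is different: a dead-region test is charged only $O(f(k))$ per discarded vertex, and the $(g(k)+f(k))\log m_i$ terms arise once per Case-2 iteration from the tangent computations, where $m_i$ is the length of the polygonal chain (funnel/geodesic portion) handled in that iteration. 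Using $\sum_i m_i=O(n)$ and the bound on the number of such iterations, $\sum_i \log m_i$ is maximized when $m_i\approx n/\sqrt K$, which is exactly where the $\tfrac{n\log K}{\sqrt K}$ factor comes from; it is a bound on $\sum_i\log m_i$, not a per-pop logarithm. To repair your proof you would need (a) the order-versus-output-size argument above in place of the bend-count claim, and (b) per-iteration charging in terms of the chain lengths $m_i$ rather than a uniform $\log K$ cost per popped vertex.
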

\begin{proof}
At every iteration of the algorithm, new segments are added to $\pi^{*}$, and thus, by Theorem~\ref{thm:size}, the algorithm will terminate.

First, we prove that the path $\pi^{*}$ that the algorithm produces is self-approaching. In Case 1, a straight-line segment is added to $\pi^{*}$. A normal to any point not his segment does not intersect the convex hull of the future path, and thus does not intersect the future path itself. In Case 2, an involute segment and one or a number of straight-line segments are added to $\pi^{*}$. Consider a point on the involute part. By construction, a normal to the involute at the point is tangent to the convex hull of the future path, and thus does not intersect it. The rest of the path added to $\pi^{*}$ is either a straight-line segment tangent to the involute, or a convex polygonal chain which is bending in the opposite direction as the involute segment and has the total turn angle not more than $90^{\circ}$. In both cases, the normals to any point on these segments do not intersect the future path. Thus, if the algorithm outputs a path, it is self-approaching.

Moreover, by construction, the path does not have convex bends, and only has a positive curvature at on the boundaries of dead regions. In other words, the path is geodesic in $P\backslash\cup D$. Any shorter path will intersect one of the dead regions, and thus will not be self-approaching.

The algorithm reports that a self-approaching path does not exist only in two cases: (1) if $s$ lies in a dead region; (2) if an involute to a convex hull separates $s$ from $t$. In both cases, any path from $s$ to $t$ will have to cross some dead region, and thus cannot be self-approaching.

The running time of the initialization step of the algorithm is $O(n)$.

The bottleneck in the running time of one iteration of the algorithm for Case 1 is the time it takes to update the convex hull. In the case when only one segment is being added to $\pi^{*}$, the update consists of computing two tangents from a point to the convex hull. This can be done in $O(\log|\ch|+f(k))$ time, where $k$ is the order of the involute to which the range point belongs. After that, the values $\dist_{cw}$ and $\dist_{ccw}$ are updated in constant time.

In Case 2, we spend $O(f(k))$ time for each vertex of the geodesic path on the left polygonal boundary $L$ that lies in the dead region. One ray-shooting query takes $O(\log n)$ time. Building an outer tangent between the chain from $p_{i}$ to $p_{j}$ and the involute takes $O(\log(m+|CH|)+g(k)\log m + f(k))$ time, where $m$ is the length of the chain, and the number of segments on the convex hull of the involute is bounded by the size of the convex hull \ch. Note, that the length of the involute chain may be more than $|CH|$ if it winds around the convex hull \ch several times. However, we only need its outer layer to be able to construct the tangent line. Building an inner tangent between the chain from $p_{r}$ to $p_{j}$ and the involute, including testing for intersections, takes $O(\log(m+|CH|)+(g(k)+f(k))\log m)$ time, where $m$ is the length of the chain. Updating the convex hull includes, possibly, computing a tangent from a point to $\ch$, or a common tangent between $\ch$ and a chain of involutes. This can take up to $O(f(k))$ time. After that, the values $\dist_{cw}$ and $\dist_{ccw}$ are updated in constant time per segment, that is in total in $O(|CH|+m)$ time. If $K$ is the size of the output, \ie, the number of segments of $\pi^{*}$, and $m$ is about $\frac{n}{\sqrt{K}}$ to maximize the value of $\sum\log m_{i}$, the total running time adds up to $O(K+\frac{n\log K}{\sqrt{K}}(g(\sqrt{K})+f(\sqrt{K})))$.
\end{proof}

\section{Self-approaching polygon}\label{sec:sa-poly}

A polygon is self-approaching, if for any two points there exists a self-approaching path connecting them.

\begin{theorem}
\label{thm:sa-poly}
Polygon $P$ is self-approaching if and only if for any disk $D$ centered at any point $p\in P$, the intersection $D\cap P$ has one connected component.
%\end{restatable}
\end{theorem}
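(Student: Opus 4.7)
The plan is to prove the two directions in turn. The forward direction ($\Rightarrow$) is short. Fix any $p \in P$ and any disk $D$ of radius $r$ centered at $p$, and pick an arbitrary $q \in D \cap P$. Since $P$ is self-approaching, there is a self-approaching path $\pi$ from $q$ to $p$ in $P$, and applying the defining inequality with $a = q$, $c = p$ to every intermediate $b \in \pi$ yields $|bp| \le |qp| \le r$, so $\pi \subseteq D \cap P$. Hence $q$ and $p$ lie in the same connected component of $D \cap P$, and since $q$ was arbitrary, $D \cap P$ is connected.

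For the reverse direction ($\Leftarrow$) I would prove the contrapositive: assume there exist $s, t \in P$ with no self-approaching $s$-to-$t$ path, and produce a disk centered at a point of $P$ whose intersection with $P$ is disconnected. Run the algorithm of Section~\ref{sec:sa-path} on $(s, t)$; by Theorem~\ref{thm:main-algo} it halts reporting failure, and by the correctness argument in that theorem this happens in one of two configurations: either (i) $s$ lies in the dead region $D_\ell$ on the concave side of an involute $I_{\ch_\ell}$ of the convex hull $\ch_\ell$ of the partial path, or (ii) such an involute crosses the opposite chain of $\partial P$ and separates $s$ from $t$. In either case the involute winds around $\ch_\ell$, and the dead region is a subregion of $P$ bounded by an arc of $\partial P$ and by $I_{\ch_\ell}$.

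To extract the witness disk I would use the arc-length identity derived in Section~\ref{sec:prelim}, namely that the tangent from any point on $I_{\ch_\ell}$ to its tangent point on $\ch_\ell$ has length equal to the unwound arc of $\ch_\ell$. This identity forces any $P$-curve between points on the two sides of $I_{\ch_\ell}$ to detour around $\ch_\ell$ and hence be strictly longer than the Euclidean chord across the dead region. Choosing a center $p$ on $\partial P$ where $I_{\ch_\ell}$ first meets the boundary and a radius $r$ slightly larger than the Euclidean chord across $I_{\ch_\ell}$ but strictly smaller than the minimum $P$-detour around $\ch_\ell$ should yield $D_p(r) \cap P$ with two connected components, contradicting the disk condition. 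The main obstacle is making this last step rigorous in both failure configurations, particularly when $\ch_\ell$ is a complicated alternation of polygon edges with involute pieces of various orders, so that the comparison between the Euclidean gap and the $P$-detour length has to be established uniformly; this reduces to a geometric estimate using the explicit involute formula from Section~\ref{sec:prelim}.
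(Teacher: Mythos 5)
Your forward direction is fine and is essentially the paper's argument (the defining inequality with $c=p$ keeps the whole path inside the disk). The reverse direction, however, has a genuine gap, and it is precisely at the step you flag as the "main obstacle": extracting a witness disk from the failure configuration of the Section~\ref{sec:sa-path} algorithm. The dead region $D_{\ell}$ is bounded by the involute $I_{\ch}$, which is a curve through the \emph{free space} of $P$; the convex hull $\ch$ of the path suffix is likewise not an obstacle. Only $\partial P$ can disconnect $D\cap P$, and a path in $P$ may cross $I_{\ch}$ or $\ch$ freely. "Dead" only means that no self-approaching continuation of that particular suffix $\pi^{*}(p_{\ell},t)$ can start there; it does not mean the region is geodesically far from the rest of $P$ --- two points on opposite sides of the involute can be joined inside $P$ by an arbitrarily short arc. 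Consequently there is no "minimum $P$-detour around $\ch$" to compare against the Euclidean chord, the tangent-length identity for involutes says nothing about $\partial P$, and a disk centered where $I_{\ch}$ first meets the boundary will in general meet $P$ in a single component. So the proposed extraction does not produce a disconnected $D\cap P$ in either failure configuration. (A secondary issue: invoking Theorem~\ref{thm:main-algo} imports the model-of-computation assumption of Section~\ref{sec:sa-path} into a purely geometric characterization.)

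For contrast, the paper proves the reverse direction directly and elementarily: assuming every disk meets $P$ in one connected component, it shows the \emph{geodesic} from $s$ to $t$ is itself self-approaching. If the geodesic $\gamma$ violated the normal property, take the violating segment $\seg{v_iv_j}$ closest to $t$; then the later part $\gamma(v_j,t)$ re-enters the strip between the two normals of the polygon boundary edge incident to the reflex vertex $v_j$ facing $\seg{v_iv_j}$, and a disk centered at a point of $\gamma(v_j,t)$ inside that strip crosses this boundary edge twice, so its intersection with $P$ is disconnected --- the wall of $P$, not an involute, is what separates the disk. If you want to keep a contrapositive route, the object to localize on is a boundary edge whose half-strip (between the normals at its endpoints, on the side away from the local interior) is re-entered by $P$ --- the same criterion the paper later uses for the linear-time test --- rather than the dead regions of the shortest-path algorithm.
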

\begin{proof}
Let polygon $P$ be self-approaching. We will show that for any disk $D$ centered at any point $p\in P$, the intersection $D\cap P$ has one connected component. Assume that there exist a point $p$ and a disk $D$ centered at $p$ such that intersection $D\cap P$ has more than one connected component. Then choose any point $s$ inside a connected component other than the one containing $p$. Then any $s$-to-$p$ path will cross the boundary of the disk, and therefore such path will not be self-approaching.

\begin{figure}[t]
\centering
\includegraphics{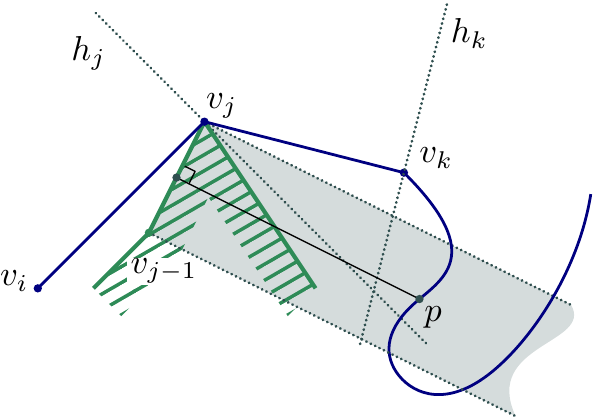}%
\caption{Illustration to Theorem~\ref{thm:sa-poly}: if $D\cap P$ has one connected component for any disk $D$ centered at any point $p\in P$, a geodesic path between any two points $s$ and $t$ is self-approaching.}
\label{fig:sa-poly}%
\end{figure}

Let the intersection $D\cap P$ have one connected component for any disk $D$ centered at any point $p\in P$. We will show that for any two points $s$ and $t$ in $P$ there exists an $s$-$t$ self-approaching path inside $P$. Consider a disk $D_t$ centered at $t$ with radius $|ts|$. $D_t\cap P$ has one connected component, therefore a geodesic $\gamma$ from $s$ to $t$ lies completely inside $D_t\cap P$. We will show that $\gamma$ is self-approaching. Suppose that $\gamma$ is not self-approaching, then consider a segment $\seg{v_i v_j}$ of $\gamma$ closest to $t$ such that a perpendicular $h_j$ to $\seg{v_i v_j}$ drawn through $v_j$ intersects subpath $\gamma(v_j,t)$. Let $v_k$ be the next vertex of $\gamma$ after $v_j$, and let $h_k$ be a perpendicular to $\seg{v_j v_k}$ drawn through $v_k$. Without loss of generality, let $\{v_i, v_j, v_k\}$ make a right turn (refer to Figure~\ref{fig:sa-poly}). By assumption $\seg{v_i v_j}$ is the last segment of $\gamma$ such that $h_j$ intersects $\gamma(v_j,t)$, therefore $h_k$ does not intersect $\gamma(v_k,t)$, and therefore $h_j$ intersects $\gamma(v_j,t)$ to the right from $\seg{v_i v_j}$. Consider segment $\seg{v_{j-1} v_j}$ on the boundary of polygon $P$ such that $\seg{v_{j-1} v_j}$ is facing $\seg{v_i v_j}$. Then a strip between two perpendiculars to $\seg{v_{j-1} v_j}$ drawn through $v_{j-1}$ and $v_j$ intersects $\gamma(v_j,t)$. Select any point $p$ on $\gamma(v_j,t)$ inside the strip. There exists a disk $D_p$ centered at $p$ that intersects the segment $\seg{v_{j-1} v_j}$ twice, and therefore $D_p\cap P$ has more than one connected component. Thus, a geodesic between any two points in $P$ is self-approaching.
\end{proof}

Recall, that a path is increasing chord if it is self-approaching in both directions.
\begin{corollary}
Any self-approaching polygon is also increasing-chord.
\end{corollary}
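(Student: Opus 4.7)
The plan is to leverage the symmetry of the disk characterization from Theorem~\ref{thm:sa-poly}. Since that theorem asserts equivalence with a condition on every point $p \in P$ (namely, that $D \cap P$ is connected for every disk $D$ centered at $p$), the condition is inherently symmetric in $s$ and $t$: it does not privilege one endpoint of a candidate path over the other. The corollary should therefore follow by invoking the sufficient-direction argument twice, once for each orientation of the geodesic.

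Concretely, I would proceed as follows. First, given that $P$ is self-approaching, apply Theorem~\ref{thm:sa-poly} to obtain the disk condition at every point. Next, pick arbitrary $s,t \in P$ and let $\gamma$ be the geodesic between them. The proof of the ``if'' direction of Theorem~\ref{thm:sa-poly} shows that $\gamma$, directed from $s$ to $t$, is self-approaching; the argument used there depends only on the disk property at points of $P$ and on $\gamma$ being the (symmetric) geodesic between its endpoints. Now swap the roles of $s$ and $t$ and repeat verbatim: by the same reasoning, the same curve $\gamma$, directed from $t$ to $s$, is also self-approaching. Since $\gamma$ is therefore self-approaching in both directions, it is an increasing-chord path, establishing that $P$ is increasing-chord.

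The main (and essentially only) subtlety is making sure the reused argument really is symmetric, i.e.\ that Theorem~\ref{thm:sa-poly}'s sufficient-direction proof did not at any point use a property that distinguishes $s$ from $t$. Inspecting that proof, the only role-specific choice is the disk $D_t$ centered at $t$ with radius $|ts|$, which was used merely to locate the geodesic; replacing it by the analogous $D_s$ of radius $|st|$ works identically, because the disk condition is assumed at every point of $P$. No further computation is needed, so the corollary reduces to an essentially one-line appeal to the symmetry of the hypothesis.
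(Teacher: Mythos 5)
Your proposal is correct and matches the argument the paper intends: the sufficiency proof of Theorem~\ref{thm:sa-poly} shows the geodesic between any two points is self-approaching, and since the disk condition and the geodesic are symmetric in the two endpoints, the same curve traversed in reverse is also self-approaching, hence increasing-chord. The paper leaves the corollary unproved precisely because this symmetry observation is immediate, and your one caveat (replacing $D_t$ by $D_s$) is handled exactly as you say.
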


Next, we present an algorithm to test whether a given simple polygon $P$ is self-approaching. Observe, that from the proof of Theorem~\ref{thm:sa-poly} the following property holds: the polygon $P$ is self-approaching if and only if, for all edges $e$ on the boundary of $P$ directed in the counter-clockwise order, an area bounded between the two normals to $e$ at its two end points in the right half-plane of $e$ is free of $\partial P$. We call this area the \emph{half-strip} of $e$. We will use this property to test efficiently if the polygon is self-approaching.

Let $P$ be given as a set of points $p_{0},p_{1},\dots,p_{n-1}$ in the counter-clockwise order around the boundary. We will start at $p_{0}$, move along the boundary in the counter-clockwise order and maintain the union of all the half-strips of the edges visited so far. More precisely, we will maintain the left and the right sides, $\rho_{l}$ and $\rho_r$, of the hour-glass shape that is the union of the half-strips; $\rho_l$ and $\rho_r$ are convex polygonal chains (refer to Figure~\ref{fig:sa-poly-2}). Store the segments of $\rho_l$ and $\rho_r$ as two lists, the last segments in the lists are infinite rays.

%\begin{figure}[t]
%\centering
%\subfloat[][]{%
%\includegraphics[page=1]{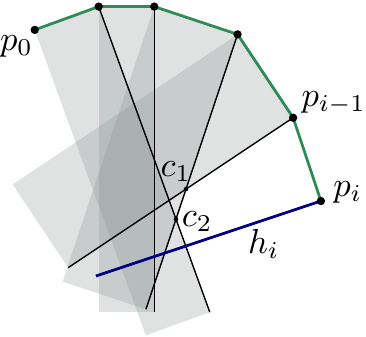}}%
%\hfil
%\subfloat[][]{%
%\includegraphics[page=3]{sa-poly-2}}%
%\\
%\subfloat[][]{%
%\includegraphics[page=2]{sa-poly-2}}%
%\hfil
%\subfloat[][]{%
%\includegraphics[page=4]{sa-poly-2}}%
%\caption{The illustration for Theorem~\ref{thm:sa-poly-alg}.}
%\label{fig:sa-poly-2}%
%\end{figure}
\begin{figure}[t]


\begin{minipage}[t]{0.5\textwidth}
\qquad\quad
\textbf{\textsf{(a)}}
\qquad
\includegraphics[page=1]{sa-poly-2}%
\end{minipage}%
\begin{minipage}[t]{0.5\textwidth}
\qquad\quad
\textbf{\textsf{(b)}}
\qquad
\includegraphics[page=3]{sa-poly-2}%
\end{minipage}\\
\begin{minipage}[t]{0.5\textwidth}
\qquad\quad
\textbf{\textsf{(c)}}
\qquad
\includegraphics[page=2]{sa-poly-2}%
\end{minipage}%
\begin{minipage}[t]{0.5\textwidth}
\qquad\quad
\textbf{\textsf{(d)}}
\qquad
\includegraphics[page=4]{sa-poly-2}%
\end{minipage}
\caption{The illustration for Theorem~\ref{thm:sa-poly-alg}.}
\label{fig:sa-poly-2}%
\end{figure}

At every iteration of the algorithm, perform the following steps. Let $p_{i}$ be the current point of the polygon $P$. The chain $\rho_r$ contains the right side of the union of all the half-strips up to point $p_{i-1}$. Consider the next boundary segment $\seg{p_{i-1}p_{i}}$, and a perpendicular ray $h_{i}$ at the point $p_{i}$ (refer to Figure~\ref{fig:sa-poly-2} (a)). To update the chain $\rho_r$, do the following: Traverse $\rho_r$, and for every its segment $\seg{c_{j}c_{j+1}}$,
\begin{itemize}
\item if $\seg{p_{i-1}p_{i}}$ intersects $\seg{c_{j}c_{j+1}}$, then report that $P$ is not self-approaching and terminate;
\item if $h_{i}$ intersects $\seg{c_{j}c_{j+1}}$, calculate the intersection point $c'$, and replace the first elements of the list $\rho_r$ up to $\seg{c_{j}c_{j+1}}$ with two segments, $\seg{p_{i}c'}$ and $\seg{c'c_{j+1}}$; repeat for the next point $p_{i+1}$.
\end{itemize}
Traverse the boundary of polygon $P$ twice in the counter-clockwise order, and then repeat the same algorithm traversing the boundary of $P$ twice in the clockwise order. If none of the segments $\seg{p_{i-1}p_{i}}$ intersected a segment of $\rho_r$, report that $P$ is self-approaching.

\begin{theorem}\label{thm:sa-poly-alg}
Given a simple polygon $P$ with $n$ vertices, the presented algorithm tests in $O(n)$ time if it is self-approaching.
\end{theorem}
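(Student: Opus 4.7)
For correctness, I would start from the structural observation stated just before the algorithm: polygon $P$ is self-approaching if and only if, for every CCW-oriented edge $e=(p_{i-1},p_i)$, the half-strip of $e$ (bounded by $e$ and the two perpendicular rays at $p_{i-1}$ and $p_i$ on its right side) contains no point of $\partial P$. The ``only if'' direction follows by the same disk construction used in Theorem~\ref{thm:sa-poly}: if $\partial P$ enters the half-strip of $e$, a small disk centered at an appropriate point strictly inside that strip near the offending boundary piece has at least two connected components of intersection with $P$, contradicting the characterization of Theorem~\ref{thm:sa-poly}. The ``if'' direction is essentially the forward implication used inside the proof of Theorem~\ref{thm:sa-poly}, where the obstruction to a self-approaching geodesic was precisely such a half-strip violation.

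The algorithm encodes this check directly. At iteration $i$, the chain $\rho_r$ stores the right boundary of the union $H_0 \cup \cdots \cup H_{i-1}$ of all half-strips already examined, so testing whether the new edge $\seg{p_{i-1}p_i}$ crosses $\rho_r$ detects whether this edge enters some previous half-strip; the perpendicular ray $h_i$ is then merged into $\rho_r$ so that the union extends to include $H_i$. Traversing CCW twice guarantees that every edge is eventually tested against every other edge's half-strip, regardless of index order: on the first pass an edge is only compared against the half-strips of edges preceding it from $p_0$, whereas the second pass catches the remaining pairs via wrap-around. The two CW passes cover the symmetric situation where the offending half-strip stacks up on the opposite side of the direction of traversal, which the CCW hour-glass structure cannot see.

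For the $O(n)$ running time, I would use an amortized argument on the evolution of $\rho_r$. At each iteration, at most a constant number of new segments is appended to $\rho_r$ (the ray $h_i$ and at most one truncation of the previous last segment), so over a single pass the total number of segments ever created is $O(n)$. The per-iteration work is a sequential scan of $\rho_r$ from its tail that stops as soon as $h_i$ ceases to cross the current segment or as soon as $\seg{p_{i-1}p_i}$ is found to cross one (at which point we halt). Each visited segment is either removed at this iteration or is the single last segment visited; charging the cost of each removed segment to its original creation and paying $O(1)$ per iteration for the terminal segment yields $O(n)$ total work per pass, hence $O(n)$ over the four passes. The main subtlety, which I expect to be the delicate step, is to show that the sequential scan never visits a surviving segment other than the single stopping one; this follows from the convexity of $\rho_r$ together with the fact that $h_i$ is a straight ray entering the chain from one side, so once $h_i$ fails to cross a segment of $\rho_r$ it fails to cross all earlier ones, and the intersection test with $\seg{p_{i-1}p_i}$ is piggy-backed onto the same scan at constant additional cost per segment examined.
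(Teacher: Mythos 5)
Your running-time analysis is essentially the paper's: amortize over the segments of $\rho_r$, charging each removed segment to its creation, with $O(1)$ new segments per iteration; that part is sound. The genuine gap is in the correctness argument. You assert that testing the new edge $\seg{p_{i-1}p_{i}}$ against $\rho_r$ ``detects whether this edge enters some previous half-strip,'' so that the only uncovered situations are wrap-around pairs (handled by the second pass) and a vaguely described ``symmetric'' situation (handled by the clockwise passes). But the algorithm's scan of $\rho_r$ stops, and truncates the chain, as soon as the perpendicular ray $h_i$ hits a segment of $\rho_r$; if $\seg{p_{i-1}p_{i}}$ crosses $\rho_r$ only beyond that point (the paper's intersection of type~2), the violation is silently discarded in this pass. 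Likewise, the current edge is never tested against the left chain $\rho_l$ at all (type~3). So even with full pair coverage after two counter-clockwise passes, a non-self-approaching polygon can pass every counter-clockwise iteration without any detected intersection; the wrap-around observation does not address this.

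The paper's proof spends its main effort exactly here: it classifies the possible violations into three types, notes that only type~1 ($\seg{p_{i-1}p_{i}}$ hits $\rho_r$ before $h_i$ does) is detected directly, and then proves (Cases~2 and~3) that whenever the first violation in counter-clockwise order is of type~2, the clockwise traversal must encounter a type-1 intersection --- via the argument that $h_i$, and hence $h_{i-1}$, crosses the polygon boundary between the offended vertex $p_j$ and $p_{i-1}$, so the first boundary segment crossing $h_{i-1}$ from the left produces a detectable crossing in the reverse direction --- and that type~3 reduces to type~1 or~2 in the reverse traversal. Without an argument of this kind, your claim that the four passes suffice is unsupported; ``the CW passes cover the symmetric situation'' is precisely the statement that needs proof. (Your justification of the half-strip characterization itself is fine and matches the paper's remark following Theorem~\ref{thm:sa-poly}.)
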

\begin{proof}
Consider the counter-clockwise traversal of the boundary of $P$. There are two cases when the boundary segment $\seg{p_{i-1}p_{i}}$ intersects $\rho_r$. In the first case, $\seg{p_{i-1}p_{i}}$ intersects $\rho_r$, and $h_{i}$ does not intersect it (refer to Figure~\ref{fig:sa-poly-2} (b)). Let us call it the intersection of type $1$. In the second case, $\seg{p_{i-1}p_{i}}$ intersects $\rho_r$ after $h_{i}$ intersects it (refer to Figure~\ref{fig:sa-poly-2} (c)). Let us call it the intersection of type $2$. Moreover, the boundary segment $\seg{p_{i-1}p_{i}}$ may intersect $\rho_l$ (refer to Figure~\ref{fig:sa-poly-2} (d)). Let us call it the intersection of type $3$.

When traversing the polygon counter-clockwise, the presented algorithm will recognize the first type of the intersection, but not the second or third type. In case of the second type, during one iteration, the algorithm stops traversing $\rho_r$ after finding the intersection point of $h_{i}$, and thus will not find the intersection of the segment with $\rho_{r}$. And in case of the third type, the algorithm does not check for intersection with $\rho_{l}$ at all.

Nevertheless, we will prove, that by repeating the checks above twice and in two directions, counter-clockwise from $p_{0}$ to $p_{n-1}$, and clockwise from $p_{n-1}$ to $p_{0}$, the algorithm will correctly decide if the polygon is self-approaching or not.

\case{0} If the polygon is self-approaching, then none of the segments will intersect $\rho_r$ or $\rho_l$. The algorithm will traverse the polygon twice, then twice in the clockwise direction, and report that it is self-approaching.

\case{1} If only the first intersection type occurs, then the algorithm will traverse the boundary of $P$ until the first violation of the half-strip property, correctly report that the polygon is not self-approaching, and terminate.

\case{2} Suppose that the second intersection type occurs. Consider the first segment $\seg{p_{i-1}p_{i}}$, such that both $h_{i}$ and $\seg{p_{i-1}p_{i}}$ intersect $\rho_r$. Let $\seg{p_{i-1}p_{i}}$ intersect the normal to some preceding segment $\seg{p_{j-1}p_{j}}$ at the point $p_{j}$. As the ray $h_{i}$ intersects $\rho_r$ before $\seg{p_{i-1}p_{i}}$ does, it also intersects the polygon boundary between the points $p_{j}$ and $p_{i-1}$. And, therefore, the ray $h_{i-1}$ perpendicular to $\seg{p_{i-1}p_{i}}$ at the point $p_{i-1}$ also intersects the polygon boundary between the points $p_{j}$ and $p_{i-1}$. Then, consider the behavior of the algorithm during the backwards traversal. Let $p_{\ell}$ for $j\le\ell<i-1$ be the first point on the left side of the ray $h_{i-1}$. Then the segment $\seg{p_{\ell+1},p_{\ell}}$ intersects $h_{i-1}$, and either the segment $\seg{p_{\ell+1},p_{\ell}}$ intersects the left chain $\rho_{l}$ or there was another segment before $\seg{p_{\ell+1},p_{\ell}}$ that intersected $\rho_{l}$. Note, that because the intersection of $\seg{p_{i-1}p_{i}}$ and $\rho_r$ was the first violation of the half-strip property in the counter-clockwise order, the intersection of $\seg{p_{\ell+1},p_{\ell}}$ and $\rho_{l}$ cannot be of the second type, otherwise $p_{i-1}$ would already lie on the right side of a normal to $\seg{p_{\ell+1},p_{\ell}}$ at the point $\seg{p_{\ell+1}}$. Therefore, this intersection can only be of type one, and the algorithm will recognize it during the backwards traversal.

\case{3} Suppose that the third intersection type occurs. Then, there will be a segment $\seg{p_{j+1},p_{j}}$ (where $j\ge i$), for which the first or the second intersection type occurs when traversing the polygon in the opposite direction, and thus either case 1 or case 2 applies.

Thus, we only need to explicitly check for the first intersection type. The running time of the algorithm is $O(n)$. At every iteration, the number of segments removed from the list $\rho_{r}$ is equal to half the number of tests for intersections the algorithm makes, and the number of segments added back is at most $2$. Therefore, the total number of segments that can be removed from $\rho_{r}$ over one traversal of the boundary is not more than $2n$. Similarly, the total number of segments that can be removed from $\rho_{l}$ over one traversal of the boundary  is not more than $2n$. Therefore, the algorithm performs $O(n)$ intersection tests.
 \end{proof}

\section{Reachable and reverse-reachable regions}\label{sec:rev-reach-regs}

\begin{figure}[t]
\centering
\includegraphics{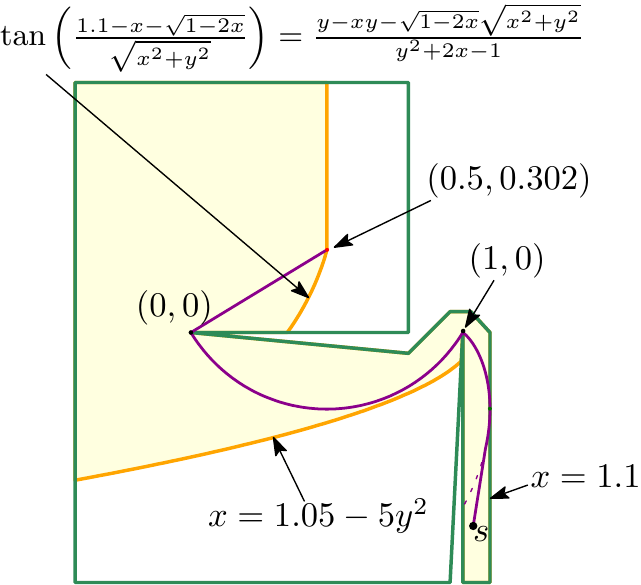}
\caption{An instance of a polygon $P$ and a point $s$ such that the boundary of the reachable region $\R(s)$ contains transcendental curves.}
\label{fig:reachable-example}%
\end{figure}

Recall, that the set of points that can be reached from $s$ by a self-approaching path in $P$ is the reachable region of $s$; and the set of points from which point $t$ can be reached with a self-approaching path in $P$ is the reverse-reachable region of $t$.

\subparagraph{Reachable regions.} A region, reachable from a point in a simple polygon, can have a complicated structure, and have transcendental equations defining its boundary. Figure~\ref{fig:reachable-example} shows an example of a reachable region, whose boundary is described by a transcendental equation already after the second turn of a self-approaching path.

Nevertheless, reachable regions seem to have some nice properties. We conjecture that,
\begin{conjecture}
Reachable region $\R(s)$ is geodesically convex. That is, for any two points $t_{1}$ and $t_{2}$ that are reachable from $s$ by two self-approaching paths, any point on the shortest path connecting $t_{1}$ and $t_{2}$ is also reachable from $s$ by a self-approaching path.
\end{conjecture}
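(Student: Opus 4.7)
The plan is, given self-approaching paths $\pi_1$ from $s$ to $t_1$ and $\pi_2$ from $s$ to $t_2$ and a point $q$ on the geodesic $\gamma(t_1,t_2)\subseteq P$, to explicitly construct a self-approaching path from $s$ to $q$. First I would clean up the input by repeatedly applying Lemma~\ref{lem:geodesic-2} between consecutive intersection points of $\pi_1$ and $\pi_2$, replacing overlapping subpaths by geodesics; the resulting paths $\pi_1$ and $\pi_2$ meet only at $s$, are still self-approaching, and together with $\gamma(t_1,t_2)$ bound a simply connected region $R\subseteq P$ with $q\in\partial R$.

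The main construction is then to take the geodesic $\sigma$ from $s$ to $q$ inside $R$ and prove that $\sigma$ is self-approaching via the half-plane property (Lemma~\ref{lem:half-plane}). At a smooth point $p\in\sigma$ lying on $\pi_1$ or $\pi_2$, the normal is inherited from that path, and since $\sigma(p,q)\subseteq\ch(\pi_i(p,t_i))$ by geodesic convexity inside $R$, the convex-hull argument from Lemma~\ref{lem:geodesic} should carry over; at bends of $\sigma$ at reflex vertices of $P$, the intersection of the two positive half-planes contains the future of $\sigma$. The hard case is the portion of $\sigma$ that runs along the geodesic $\gamma(t_1,t_2)$ near $q$: since $\gamma$ is not a priori self-approaching, the convex-hull argument no longer supplies a normal whose positive half-plane contains $\sigma(p,q)$. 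I would try to rule this out by arguing that no bend of $\sigma$ can occur at a reflex vertex of $\gamma$ in the interior of $R$, because any such bend would be shortcuttable through $R$ in the spirit of Lemma~\ref{lem:bends} and Lemma~\ref{lem:inflection}, contradicting that $\sigma$ is the geodesic in $R$. Making this rigorous, and carefully handling $\sigma$ at its endpoint $q$, is the crux of the argument and presumably the reason the statement is only conjectured.

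A complementary strategy, useful if the direct construction breaks down, would be a continuity/compactness argument on the set $S=\{q\in\gamma(t_1,t_2):q\in\R(s)\}$. Openness of $S$ could be obtained by locally perturbing the endpoint of a shortest self-approaching path using the involute-based characterization of Theorem~\ref{thm:shortest-path}, while closedness would follow from extracting a convergent subsequence of witnessing self-approaching paths via an Arzel\`a--Ascoli argument, with uniform length bound supplied by the $5.3331$ detour bound of~\cite{Icking1999}. Since $t_1,t_2\in S$ and $\gamma(t_1,t_2)$ is connected, this would force $S=\gamma(t_1,t_2)$ and hence establish the conjecture.
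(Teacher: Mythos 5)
First, note that the paper does not prove this statement at all: it is stated as a conjecture and left open, so there is no proof of record to compare against, and your own write-up concedes the decisive step is missing. As it stands the proposal is not a proof, and both of your strategies have concrete gaps. In the direct construction, the convex-hull argument of Lemma~\ref{lem:geodesic} relies essentially on the two self-approaching paths sharing the \emph{same} endpoint $t$: the positive half-plane of a normal to $\pi_i$ at $p$ contains $\pi_i(p,t)$, hence $\ch(\pi_i(p,t))$, and the geodesic between the two paths stays inside that hull because it has the same endpoints. Here $t_1\neq t_2$ and the target $q$ lies on $\gamma(t_1,t_2)$, so the claimed containment $\sigma(p,q)\subseteq\ch(\pi_i(p,t_i))$ is unjustified; $q$ and the tail of $\sigma$ along $\gamma(t_1,t_2)$ can lie outside that hull, and then the half-plane property at points $p\in\sigma\cap\pi_i$ simply does not follow. (Also, $\gamma(t_1,t_2)$ may cross $\pi_1$ or $\pi_2$, so the region $R$ needs more care than stated.) For the portion of $\sigma$ along $\gamma(t_1,t_2)$, the obstruction is not bends at reflex vertices, which Lemma~\ref{lem:bends}/Lemma~\ref{lem:inflection} style shortcutting could perhaps exclude, but the normals at \emph{smooth} points of $\gamma$ crossing the future subpath: a geodesic in a simple polygon is in general not self-approaching (this is exactly why Theorem~\ref{thm:sa-poly} needs the disk-intersection condition), and nothing in your argument supplies the missing half-plane containment there. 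You correctly identify this as the crux, but it is precisely the content of the conjecture, so the first strategy is circular at the critical point.

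The fallback clopen argument fails at openness. The set $\R(s)$ is (after an Arzel\`a--Ascoli argument of the kind you sketch, using the closedness of the condition $|ac|\ge|bc|$ and the detour bound for equicontinuity/length control) a closed subset of $P$, but it is emphatically not open: its boundary consists of the very ``dead region'' curves described in Theorem~\ref{thm:shortest-path} and Section~\ref{sec:rev-reach-regs}, and a point $q\in\gamma(t_1,t_2)$ can lie exactly on $\partial\R(s)$. Perturbing the endpoint of a shortest self-approaching path does not give nearby reachable points in all directions at such a $q$, and ``openness of $S$ in $\gamma(t_1,t_2)$'' is essentially a local restatement of the conjecture rather than something you can obtain independently; if the argument worked one could equally ``prove'' $\R(s)=P$, which is false. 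So the second strategy also does not close the gap: what you have is a correct proof that $S$ is closed and contains $t_1,t_2$, which yields nothing by itself. To make progress one would need a genuinely new idea for why reachability cannot be lost and regained along $\gamma(t_1,t_2)$, for instance an argument tied to the structure of the dead regions cut off by involutes of convex hulls of partial paths, which is presumably why the authors leave the statement as a conjecture.
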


\subparagraph{Reverse-reachable regions.}
Reverse-reachable regions $\RR(t)$ (or their approximations) can be constructed by a modified algorithm for finding a shortest self-approaching path for two given points. We start at the destination point $t$ and traverse the shortest path tree $\mathit{SPT}(t)$ in a breadth-first search manner, building a self-approaching shortest path tree. For every iteration, we construct a dead region by building an involute of the convex hull of the current leaf-to-root self-approaching path, and check for the intersections of the dead regions with the boundary of $P$ that may cut off parts of the polygon. The remaining region $P\backslash\cup D$ is the reverse-reachable region. As for the algorithm of computing the shortest self-approaching path, it takes $O(n^{2}+f(n)\log n)$ time to construct the reverse-reachable region. Then, a shortest $s$-$t$ path query for a query point $s$ can be answered by finding a tangent from $s$ to the boundary of $\RR(t)$, and then following the appropriate branch of the shortest path tree.

\section*{Acknowledgements}

This work was begun at the CMO-BIRS Workshop on Searching and Routing in Discrete and Continuous Domains, October 11--16, 2015. Irina Kostitsyna was supported in part by the NWO under project no. 612.001.106, and by F.R.S.-FNRS.

%%
%% Bibliography
%%
\bibliographystyle{plain}% the recommended bibstyle
\bibliography{sa-paths-arxiv}

\end{document}